\newtheorem{theorem}{Theorem}
\newtheorem{definition}{Definition}
\newtheorem{lemma}{\bf Lemma}[section]
\begin{document}

\title{Quality-Assured Synchronized Task Assignment  in Crowdsourcing}

\author{
	~Jiayang~Tu, ~Peng~Cheng,  ~Lei~Chen,~\IEEEmembership{Member,~IEEE}
	\IEEEcompsocitemizethanks{
		\IEEEcompsocthanksitem The authors are with
		the Department of Computer Science and Engineering, Hong Kong
		University of Science and Technology, Kowloon, Hong Kong, China. Email: jtuaa@cse.ust.hk, pchengaa@cse.ust.hk, leichen@cse.ust.hk.} %
}

\IEEEcompsoctitleabstractindextext{
	
	\begin{abstract}
With the rapid development of crowdsourcing platforms that aggregate the intelligence of Internet workers, crowdsourcing has been widely utilized to address problems that require human cognitive abilities. Considering great dynamics of worker arrival and departure, it is of vital importance to design a task assignment scheme to adaptively select the most beneficial tasks for the available workers. In this paper, in order to make the most efficient utilization of the worker labor and balance the accuracy of answers and the overall latency, we a) develop a parameter estimation model that assists in estimating worker expertise, question easiness and answer confidence; b) propose a \textit{quality-assured synchronized task assignment scheme} that executes in batches and maximizes the number of potentially completed questions (MCQ) within each batch. We prove that MCQ problem is NP-hard and present two greedy approximation solutions to address the problem. The effectiveness and efficiency of the approximation solutions are further evaluated through extensive experiments on synthetic and real datasets. The experimental results show that the accuracy and the overall latency of the MCQ approaches outperform the existing online task assignment algorithms in the synchronized task assignment scenario.
\end{abstract}
	
	\begin{IEEEkeywords}
		Crowdsourcing, Scheduling Algorithm, Task Assignment
	\end{IEEEkeywords}
	
}

\maketitle
\section{Introduction} \label{intro}
No matter in the field of academic research or real-world applications, crowdsourcing has gained significant attention and popularity in the past few years. Serving as a complementary component of human computation, crowdsourcing is leveraged to solve questions that require human cognitive abilities, for example, schema matching \cite{zhang2013reducing} and entity resolutions \cite{wang2013leveraging}. The emergence of multiple well-established public crowdsourcing platforms such as Amazon Mechanical Turk (AMT) and CrowdFlower has facilitated a manageable worker labor market, from which most enterprises can seek services and solutions in a more convenient manner. Although Internet workers of various background can offer joint intelligence, it inevitably brings certain issues due to the difficulty of worker qualification. Workers have different domain knowledge and are error-prone especially when assigned to questions they are unskilled at. To control the quality of answers, a common alternative for most requesters is to design a task assignment scheme that assigns the most beneficial tasks to the target workers.

There are many existing studies to effectively perform \textbf{task assignment} with quality assurance on an online basis \cite{boim2012asking,fan2015icrowd,khan2017crowddqs,liu2012cdas,zheng2015qasca}, where task selection is based on some predefined evaluation metrics such as Accuracy or F-score \cite{zheng2015qasca}, and each coming worker is assigned with the top-$k$ tasks that maximize the evaluation metric. However, online task assignment has its limitation. That is, the number of required repetitions for tasks is set beforehand, thus it is unlikely for the requesters to augment or terminate the allocation of a task according to its answer confidence, which will either cause the return of an uncertain question or an unnecessary waste of worker labor (or a longer delay). In the latter part of this section, we will give a motivating example to further illustrate this problem.

As mentioned above, a task assignment scheme that dynamically determines the number of repetitions of tasks is urgently needed. To improve the existing work, there are two challenges to deal with: 1) an efficient parameter estimation model should be developed. Task assignment is typically studied in isolation or together with parameter estimation in the online scenarios. In other words, the model must be able to keep track of important parameters like the confidence of answers that assist in the task assignment optimization. What's more, this model should run as fast as possible without causing a long delay; 2) the accuracy and the overall latency should be balanced. It is observed that the accuracy of the answers can be improved if we assign more repetitions of tasks to the workers \cite{hirth2013analyzing}, unfortunately, the overall latency or needed cost will consequently increase in this case. Therefore, providing a quality-assured task assignment strategy while not sacrificing the overall latency is of vital importance.

To address the first challenge, we devise a parameter estimation model that assumes \textbf{worker expertise} and \textbf{question easiness} are two potential parameters to collectively determine the \textbf{confidence of answers} \cite{whitehill2009whose}. Specifically, worker expertise denotes the average accuracy of a worker, confidence of answers determines the probability of each answer being the truth and question easiness measures the certainty of current voted answers that guides the question assignment strategy. Intuitively, answers that belong to easier questions and are returned by expert workers will have higher probabilities of being true. Since these three parameters are supposed to be interdependent, the estimation can be accelerated by iterative computation. In addition, the parameters are adaptively updated in the question answering process, so it requires no prior knowledge of workers. Regarding the actual situation of open crowdsourcing platforms where workers are quite dynamic and comprehensive profiles of all workers are hard to obtain, this model is more general to apply in practical crowdsourcing circumstances. In our model, a question is noticed to become easier when the inferred answer is more certain. With this regard, we provide an easiness score threshold $\delta$ denoting the certainty of returned answers, and requesters can set the threshold value according to their preferences to control the accuracy of answers. 

To resolve the second challenge, based on the parameter estimation model stated above, we run our task assignment scheme batch by batch \textit{(synchronized task assignment)} and renew the model parameters at the end of each batch. At the very beginning of every batch, we carefully consider simultaneous task processing situation and assign each idle worker to one task, aiming at maximizing the number of completed questions without wasting worker labor. In order to fulfill this objective, workers are assigned to a question set $Q_{T}$ that is as packed as possible, namely, $|Q_{T}|$ is minimized. The advantages of batch processing and the optimization goal are two-fold: 1) worker labor is concentrated on the smallest set of questions, then for each batch it is likely to gain more completed questions, thus the overall latency can be controlled; 2) fewer questions are occupied for worker processing, which reserves more available questions for next batch workers and enables further optimization. The latter advantage explains the reason why batch processing is adopted in this paper.

As we mentioned earlier, current task assignment assumes an online task assignment scenario, that is, tasks are set with a fixed number of repetitions beforehand and assigned to the workers once they join the platform. Such settings can have the following problems: (1) uncertain answers are likely to return if the number of repetitions is limited, resulting in less reliable data quality; (2) short-sighted task assignment may happen due to the unconsciousness of the overall situation, which may cause exceeding assignment of the questions, waste worker labor and lead to longer delay. To improve the above problems, in some real applications like Didi Chuxing \cite{Zhang:2017}, batch by batch assignment (synchronized task assignment) \cite{Zhang:2017,haas2015clamshell} is widely utilized in which the joint benefit of multiple workers is optimized. Next, we show the difference between synchronized task assignment and top-$k$ online assignment regarding problem (1) with a concrete example as follows.

\textbf{Motivating Example.} \textit{Suppose two workers $W=\{w_{1}, w_{2}\}$ arrive sequentially in the question pool $Q=\{q_{1}, q_{2}\}$. And we judge whether a question can be returned by verifying whether its easiness score has reached the threshold $\delta$. Intuitively, easiness score measures the current certainty of the voted answers regarding a question and is calculated from confidences of answers, while $\delta$ is the requirement of the question certainty set by the requesters. Assume $\delta$ and the current estimated easiness score of $q_{j}$, denoting by $sc.d(q_{j})$ are known beforehand, thus the remaining easiness score of $q_{j}$  is equal to $c_{j} = \delta - sc.d(q_{j})$. In this example, we assume $c = \{0.35, 0.25\}$. Besides, the expected easiness score increase of all workers to all tasks is shown in TABLE \ref{mov}.} 
\begin{table}[t]
	\centering
	\caption{Easiness Score Increase}\label{mov}
	\bigskip
	\vspace{-4ex}
	\begin{tabular}{|c|c|c|}
		\hline
		\backslashbox{$W$}{$Q$} & $q_{1} (c_{1} = 0.35$, $rep_{1} = 0)$ & $q_{2} (c_{2} = 0.25$, $rep_{2} = 2)$\\
		\hline
		$w_{1}$ & $0.2$ & $0.2$\\
		\hline
		$w_{2}$& $0.1$ & $0.1$\\
		\hline
	\end{tabular}\vspace{-2ex}
\end{table}

\textit{\textbf{a) Synchronized task assignment.} In synchronized task assignment, we aim at making the most efficient utilization of worker labor and minimize $|Q_{T}|$ (the number of assigned tasks), so the remaining easiness score of questions $c$ cannot be overused. We show 4 possible cases of question assignment as the following:}
\vspace{-1ex}

\hspace{3ex}\textbf{Case 1:} $w_{1}, w_{2} \rightarrow q_{1}$: feasible and $|Q_{T}|=1$

\hspace{3ex}\textbf{Case 2:} $w_{1} \rightarrow q_{1}$, $w_{2} \rightarrow q_{2}$: feasible and $|Q_{T}|=2$

\hspace{3ex}\textbf{Case 3:} $w_{1}, w_{2} \rightarrow q_{2}$: overuse

\hspace{3ex}\textbf{Case 4:} $w_{1} \rightarrow q_{2}$, $w_{2} \rightarrow q_{1}$: feasible and $|Q_{T}|=2$

\vspace{1ex}

\textit{In the 4 above cases, case 1 is a feasible assignment because 0.35 - 0.2 - 0.1 $>$ 0 (similar with case 2 and 4) and case 3 is not adopted since it overuses the remaining easiness score of $q_{2}$ (0.2 + 0.1 $>$ 0.25) and results in a waste of worker labor. According to the optimization goal of synchronized task assignment, we tend to select case 1 to minimize $|Q_{T}|$. And two repetitions of $q_{1}$ are created immediately and assigned to $w_{1}$ and $w_{2}$. }

\textit{\textbf{b) Online task assignment} (assume $k = 1$). According to online task assignment, workers arrive in the order of $w_{1}$, $w_{2}$, and each of them should be allocated with one question. For each worker, assume the evaluation metric is to assign her to one question whose remaining easiness score is minimized. As noted in Table \ref{mov}, suppose the number of repetitions left for $q_{1}$, $q_{2}$ are 0, 2 respectively (denoted as $rep_{1} = 0 $ and $rep_{1} = 2$) and $c = \{0.35, 0.25\}$ stays the same. The reason why $rep_{1} = 0$ but $c_{1}$ still has a large value is that the answers of $q_{1}$ contributed by previous workers are so diverse that the remaining easiness score (uncertainty) is kept high. Since $rep_{1} = 0 $, the only task assignment method is case 3 ($w_{1}, w_{2} \rightarrow q_{2}$). As a result, $q_{1}$ is returned but far from certainty while the assignment has exceeded the remaining easiness score $c_{2}$ of $q_{2}$ and wastes worker labor.}

The motivation example reveals that compared to top-$k$ online task assignment, synchronized task assignment is able to make efficient utilization of worker labor. We keep track of the remaining easiness score of questions to carefully assign workers and decide whether to increase or decrease the allocation of a question. Therefore, the number of repetitions of every question can be determined on the fly and unnecessary expenses can be avoided for the requesters. And answers can be returned with quality assurance.

To sum up, we make the following contributions:
\begin{itemize}[leftmargin=*]
	\item We develop an efficient parameter estimation model that assists in estimating worker expertise, question easiness and answer confidence. 
	\item We propose a quality-assured synchronized task assignment scheme executing in batches and \textit{maximizing the number of completed questions} (MCQ) in every batch. And we prove that MCQ problem is NP-hard. 
	\item We present two efficient approximation algorithms to address the MCQ problem. Extensive experiments are conducted on synthetic and real datasets to evaluate them.
\end{itemize}

The rest of this paper is organized as follows. In Section \ref{model}, we introduce our parameter estimation model. And we formally formulate the MCQ problem in Section \ref{ps}. Then we present and analyze two greedy solutions in Section \ref{solution}. We evaluate the performance of the algorithms in Section \ref{experiment}. Related literature is discussed in Section \ref{relatedwork}. We finally conclude the work in Section \ref{conclusion}.

\section{Parameter Estimation Model} \label{model}
In this section, we begin with the modeling of basic crowdsourcing components and show the detailed development of the parameter estimation model. We lay the foundation of the optimization problem in next section and briefly introduce several standard crowdsourcing concepts commonly used throughout this paper.
\begin{itemize}[leftmargin=*]
	\item{\textbf{Worker.}} Workers arrive at the crowdsourcing platfom and select human intelligence tasks to peform. In this paper, workers are assumed to be rational, meaning that there exists no group of malicious workers that aims to dominate the answers and destroy the benifits of the crowdcourcing platform or the other workers.
	
	\item{\textbf{Question.}} Questions (or tasks) are created and published to crowdsourcing platforms by requesters. The task types are usually varying from multiple-choice questions to numeric questions, both of which are micro-tasks that require short processing time for workers.
	Note that \textit{questions} and \textit{tasks} are interchangeably used in this paper and they refer to the same entity.
	\item{\textbf{Repetition.}} Repetition is the most atomic unit of a question that a worker can respond to. The number of repetitions regarding a question is declared by the requester on task creation with the purpose of improving the answer reliability. For a single question, the repetitions are identical and are assigned to different workers. 
	\item{\textbf{Answer.}} Answers are submitted worker votes. Answers to the same question are often various due to the distinct worker expertise. Note that \textit{answers}, \textit{choices} and \textit{votes} are interchangeably used in this paper and they refer to the same entity.

\end{itemize}

\begin{table}	\label{table:notation}
	\centering \small{
	\begin{footnotesize}
		\caption{Summary of Symbols}\vspace{-4ex}
		\bigskip
		\begin{tabular}{c|l}

				Symbol &  \qquad\qquad\qquad Description \\
				\hline\hline
				$c(a)$ & the confidence of answer $a$\\
				$d(q)$ & the easiness of question $q$\\
				$e(w)$ & the expertise of worker $w$\\
				$A(w)$ & the set of answers provided by $w$\\
				$W(a)$ & the set of workers offering answer $a$\\
				$A(q)$ & all possible choices of question $q$\\
				$l$ & the number of answers of each question\\
				$t$ & the truth of a question \\
				$\delta$ & the easiness score threshold \\
				$w_{i}$ & the $i^{th}$ worker \\
				$q_{j}$ & the $j^{th}$ question \\
				$c_{j}$  & the remaining easiness score of question $q_{j}$\\
				$Q_{T}$ & the set of questions assgined to workers in 
				each batch\\
				$I$ & the question assignment scheme in each batch\\
				$E^{2}I_{ij}$ & the expected easiness score increase of $w_{i}$ to $q_{j}$\\
				$W(q_{j})$ & the set of workers assigned to quesiton $q_{j}$\\
				$Q$ & the set of questions in this batch\\
				$W$ & the set of available workers in this batch\\
				$Q_{o}$ & the set of current open questions\\
				$U_{i,j}$ & the remaining easiness score of $q_{j}$ after answered  by $w_{i}$\\
				$X_{j}$ & the set of workers assigned by XM Greedy to $j$th question \\
				\hline
		\end{tabular}\vspace{-3ex}
	\end{footnotesize}}
\end{table}
\subsection{Fundamental Definitions}
We first give three definitions: Answer Confidence, Question Easiness and Worker Expertise that mainly compose the parameter estimation model and play an important role in the synchronized task assignment.
\begin{definition}[\textit{\textbf{Answer Confidence}}]\label{def1}
The confidence of an answer $a$ of question $q$, denoted as $c(a)$, is the probability that $a$ is the truth of $q$.
\end{definition}

\begin{definition}[\textbf{\textit{Question Easiness}}]\label{def2}
The easiness of a question $q$, denoted as $d(q)$, is the adjusted average of the total pairwise distances between confidences of different answers. $0 < d(q) < 1$, where a larger value of d(q) indicates that $q$ is more certain to return the answer with a higher confidence as the truth.
\end{definition}

\begin{definition}[\textbf{\textit{Worker Expertise}}]\label{def3}
The expertise of a crowdsourcing worker $w$, denoted as $e(w)$, is the average expected confidence over all answers $w$ has voted.
\end{definition}

Intuitively, without ground truths, Answer Confidence helps estimate the probability of each answer being the truth; Question Easiness is used to measure how far a question is away from certainty in order to guarantee the quality of the returned answers; Worker Expertise denotes the probability of a worker that can provide correct answers for questions.

\begin{figure}[t] \centering
	\includegraphics[scale=0.5]{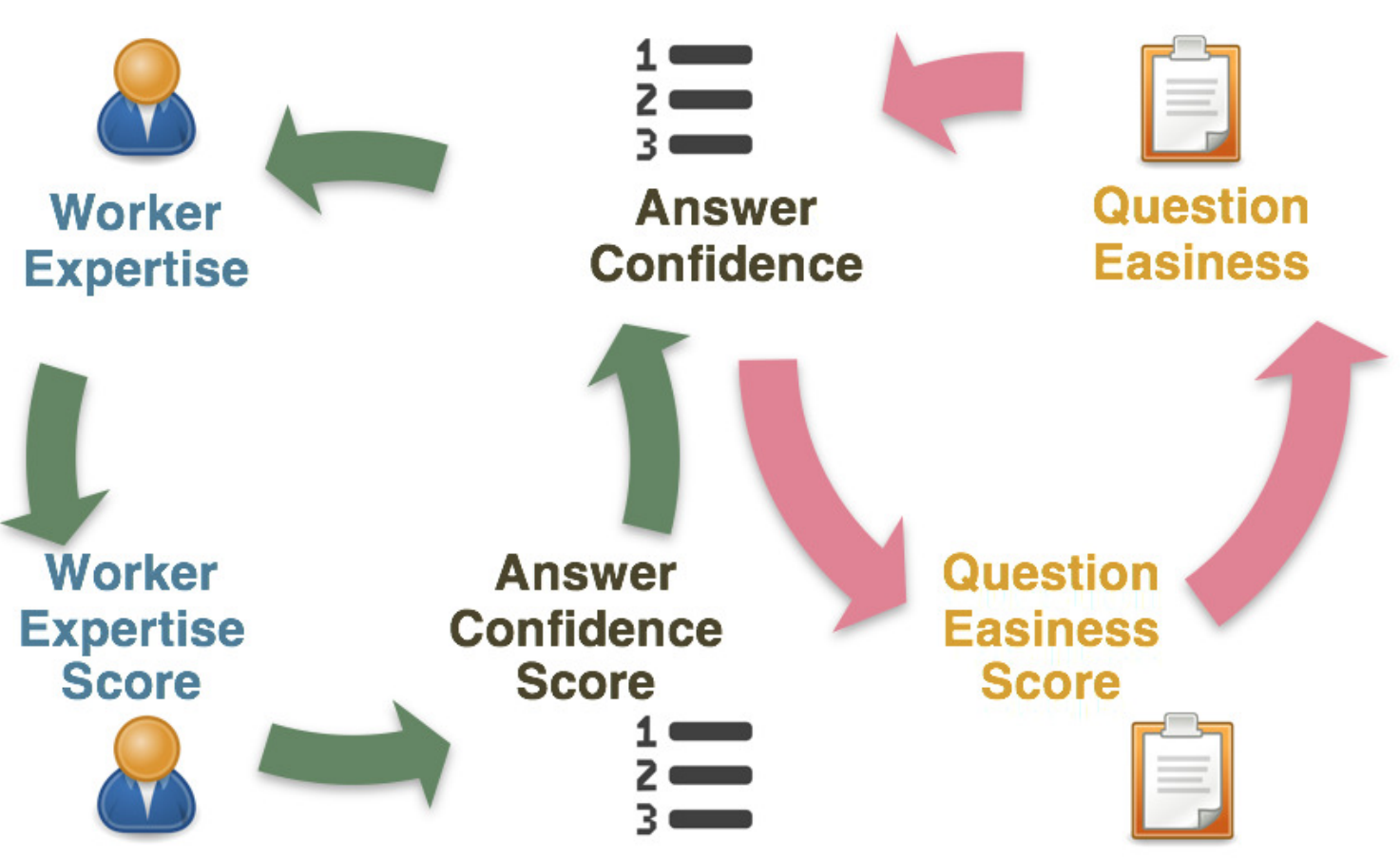}   
	\bigskip     \vspace{-3ex}
	\caption{The Dual-Cycle Parameter Estimation Model}\label{fig:model}
	\vspace{-4ex}
\end{figure}

\vspace{-1ex}
\subsection{Iterative Parameter Estimation}
The iterative parameter estimation model consists of two inference circles in Figure \ref{fig:model}, namely the inference between worker expertise and answer confidence (the green cycle) and the inference between question easiness and answer confidence (the red cycle).

As discussed in Section \ref{relatedwork}, TRUTHFINDER is run on aggregated answers and performs well on conflicting web information that usually has a large number of votes (more than 5) regarding an identical object. However, in this work, we focus on paid crowdsourcing tasks where the number of needed repetitions for a question should be as less as possible to gain equally reliable estimated parameters, for example, answer confidence. And the main challenge is that when the votes are too few and diverse at the same time, we cannot directly utilize the power of the TRUTHFINDER prototype to resolve. Together with the requirement of tiny delay in the synchronized task assignment scenario, even with very few votes and they are quite diverse, the parameter estimation model should still have the ability to act efficiently and accurately to calculate the parameters.

\subsubsection{Inference between Worker Expertise and Answer Confidence}
As given in Definition \ref{def3}, the \textit{\textbf{expertise of a worker}} $w$ can be expressed as: 
\begin{equation}
\begin{split} \label{expertise}
e(w) = \frac{{\sum_{a \in A(w)}}c(a)}{|A(w)|}, \quad e(w) \in [0, 1)
\end{split}    
\end{equation}
where $A(w)$ refers to the set of answers provided by $w$.

Given the question easiness $d(q)$, the confidence of the answer $a$ is calculated as $c(a) = d(q)(1 - \prod_{w \in W(a)}(1 - e(w))$, where $W(a)$ refers to the set of workers offering the same answer $a$ to question $q$. The formula is not hard to understand. $\prod_{w \in W(a)}(1 - e(w))$ is the probability that no workers assigned to $q$ choose $a$, so $c' = 1 - \prod_{w \in W(a)}(1 - e(w))$ represents the probability that $a$ is selected. However, in this work, we think that answer confidence does not merely rely on the expertise of the workers but also is determined by the question nature itself, namely, the question easiness. Apparently, if a question is easier, the answer obtained from the same worker will have a higher possibility to be the truth, and this is the reason why we yield $c$ by multiplying $c'$ by a factor of $d(q)$.
However, to avoid the underflow situation in $c(a)$ resulted from the repetitive multiplication of the term $(1 - e(w))$, we further define the \textit{\textbf{expertise score of a worker}} $w$, denoted as $sc.e(w)$, as followings:
\begin{equation}
\begin{split}\label{expertise score}
sc.e(w) = -\ln(1 - e(w))
\end{split}    
\end{equation}
Similarly, for the ease of calculating answer confidence from the expertise score, \textit{\textbf{confidence score of an answer}} $a$, denoted as $sc.c(a)$, is defined as: 
\begin{equation}
\begin{split} \label{confidence score}
sc.c(a) = -\ln(1 - \frac{c(a)}{d(q)})
\end{split}    
\end{equation}

We derive the relationship between expertise score and confidence score as $sc.c(a) = \sum_{w \in W(a)} sc.e(w)$ according to Equation \ref{expertise} - \ref{confidence score}. And the deduction process is shown in the following:\vspace{-2ex}
\begin{equation}
\begin{split}
c(a) &= d(q)(1 - \prod_{w \in W(a)}(1 - e(w))\\
\Leftrightarrow \ \ \ \ \ 1-\frac{c(a)}{d(q)}&	= \prod_{w \in W(a)}{(1 - e(w))}\\
\Leftrightarrow  \ln(1-\frac{c(a)}{d(q)}) & = \sum_{w \in W(a)}{\ln(1 - e(w))}\\
\Leftrightarrow \ \ \ \ \ \ \  sc.c(a) &= \sum_{w \in W(a)} sc.e(w)\\
\end{split}
\end{equation}

Next, we update the expression of \textit{\textbf{answer confidence}} with the help of Equation \ref{confidence score}: 
\begin{equation}
\begin{split} \label{deduct}
&\mathrm{e}^{sc.c(a)} =\mathrm{e}^{ -\ln(1 - \frac{c(a)}{d(q)})}\\
\Leftrightarrow \quad & c(a) =  \frac{d(q)}{1-\mathrm{e}^{-sc.c(a)}}\\
\end{split}    
\end{equation}
In Equation \ref{deduct}, answer confidence $c(a)$ can be negative, which is not reasonable according to our definition. In order to guarantee $c(a)$ has a positive value, we apply a logistic function and Equation \ref{deduct} is adjusted to $c(a) = \frac{d(q)}{1+\mathrm{e}^{-sc.c(a)}}$, where $c(a)$ is mapped to a smaller range $(0,1)$ \cite{yin2008truth}. This adjustment can be intuitively interpreted: as the easiness of the question increases (higher $d(q)$), $c(a)$ is larger, meaning that the answer $a$ has a higher probability to be the truth; when the overall expertise score of workers who vote $a$ is higher ($sc.c(a)$ becomes higher since $sc.c(a) = \sum_{w \in W(a)} sc.e(w)$), then $a$ also has a higher probability of being correct.

All $c(a), a \in A(q)$ is normalized after the calculation is finished, which ensures the positive values of worker expertise. The inference between worker expertise and answer confidence is depicted as the green cycle in Figure \ref{fig:model}.

\vspace{-0.5em}
\subsubsection{Inference between Question Easiness and Answer Confidence}
As suggested above, the answer confidence $c(a)$ is adjusted to $\frac{d(q)}{1+\mathrm{e}^{-sc.c(a)}}$. If we set $d(q)$ to 1, the inference between worker expertise and answer confidence is exactly the TRUTHFINDER framework. However, as discussed in the Section \ref{relatedwork}, TRUTHFINDER performs well simply when the question has a substantial number of votes. Based on its framework, we propose the question easiness concept in Definition \ref{def2} and develop another important inference cycle between answer confidence and question easiness, which is depicted as the red cycle in Figure \ref{fig:model}.

Let $A(q)$ be the set of all possible answers of $q$. Then the number of pairs within $A(q)$ is $p = \frac{|A(q)|(|A(q)|-1)}{2}$. In addition, we define the \textit{\textbf{easiness score of a question }}$q$ (denoted by $sc.d(q)$) as follows: 
\begin{equation}
sc.d(q) = \frac{\sum_{a_{i}, a_{j} \in A(q), a_{i} \neq a_{j} }{|c(a_{i}) - c(a_{j})|}}{p}
\end{equation}

The confidence differences over all answer pairs are summed up and averaged with the motive to measure the certainty of the voted answers. For example, given a binary-choice question with two answers $a_{1}$ and $a_{2}$ and there are two cases of the confidence distribution: (1) $c(a_{1}) = c(a_{2}) = 0.5$; (2) $c(a_{1}) = 0.8, c(a_{2}) = 0.2$. Obviously, case 2 is certain enough to return $a_{1}$ while case 1 is far from certainty. In other words, a large confidence difference indicates high certainty.

However, it is noticed that $sc.d(q)$ can vary from 0 to 1, if we pass the value of $sc.d(q)$ to $d(q)$ in the formula of $c(a)$, $c(a)$ is quite sensitive to the variations of $d(q)$. To smooth the fluctuation of $sc.d(q)$, a new concept \textbf{\textit{easiness}} ($d(q)$ in Definition \ref{def2}) is introduced and we set $d(q) = \frac{1}{1+k \cdot \mathrm{e}^{-sc.d(q)}}$, where $k \in (0,1]$ and $d(q)$ is thus mapped to $[\frac{1}{1+k}, 1)$.

The advantages of introducing question easiness are summarized into two aspects: a) For those difficult questions with uncertain answers, the confidence of their answers will be diminished because of the effect of $d(q)$ in the expression of $c(a)$. Therefore, worker expertise or expertise scores that originally dominated by these questions are declined. Consequently, the difficult questions caused by few and uncertain answers tend to become easier since the confidence of the answers voted by high-expertise workers starts to increase and exceed that of others. Due to this reason, the parameter estimation can be accelerated; b) Since question easiness is defined to measure the certainty of the tasks, it is beneficial for us to know which questions are demanding additional reliable workers. Therefore, our parameter estimation model is applicable to crowdsourcing circumstances where the worker answers are very few and diverse. Besides, it acts as a helpful sign to dynamically guide the task assignment in the latter section.

\subsubsection{Execution of the Dual-cycle Parameter Estimation Model}
The whole parameter estimation model is demonstrated in Figure \ref{fig:model}. As described above, we can estimate the worker expertise and question easiness if we know the answer confidence, and vice versa. We adopt an iterative method to run the model \cite{yin2008truth}. At the end of each batch, the model is started to calculate the parameters according to the worker answers. First of all, we initialize the expertise of workers to be equal, and we run the left cycle for a fixed number of iterations. And starting from answer confidence, we run the right cycle for the same fixed number of iterations. Finally, this dual-cycle estimation is repeated until answer confidence varies in a small range. The details are studied in Section \ref{experiment}.
\vspace{-3ex}
\section{Problem Statement} \label{ps}
\vspace{-1ex}
In this section, we first introduce some preliminary definitions, and then formally propose the \textit{synchronized task assignment to maximize the number of completed questions} (MCQ) problem. Finally, we prove that MCQ problem is NP-hard.

\begin{definition}[\textit{\textbf{Easiness score Increase}} (\textit{\textbf{$EI$}})]
Given a question $q$ with $l$ answers ($l \geq 2$), let $W(a_{i})$ denote the sets of workers who have voted $a_{i}$. As defined in Section \ref{model}, the number of answer pairs is $p = \frac{l(l-1)}{2}$ and thus the easiness score of $q$ is $sc.d(q) = \frac{\sum_{a_{i}, a_{j} \in A(q), a_{i} \neq a_{j} }{|c(a_{i}) - c(a_{j})|}}{p}$. Pick a random answer $a_{k}$, $c(a_{k}) = \frac{d(q)}{1+\mathrm{e}^{-sc.c(a_{k})}}$ where $sc.c(a_{k}) = \sum_{w \in W(a_{k})} sc.e(w)$. Assume a coming worker $w$ votes $a_{k}$, then $EI$ incurred by worker $w$ regarding question $q$ is expressed as $EI(A(w) = a_{k}|q) = sc.d(q)'-sc.d(q)$. Note that the calculation of $sc.d(q)'$ is similar with $sc.d(q)$, except that in $sc.d(q)'$, $c'(a_{k}) = \frac{d(q)}{1+\mathrm{e}^{-sc.c(a_{k})'}}$ where $sc.c(a_{k})' = sc.c(a_{k}) + sc.e(w)$.
\end{definition}

$EI$ can vary from negative to positive values. Specifically, when the answer $a_{k}$ above reduces the average confidence distance among all answers, the easiness score will decrease, then $EI(A(w) = a_{k}|q)= sc.d(q)'-sc.d(q) < 0$, and vice versa. The definition of $EI$ helps in measuring the benefit of a worker $w$ to a question $q$, where the benefit is evaluated by the expected increase of the easiness score incurred by $w$ regarding $q$ as follows.

\begin{definition}[\textit{Expected Easiness score Increase (\textbf{$E^{2}I$})}] 
	The $E^{2}I$ of a worker $w$ regarding question $q$ is: \vspace{-2ex}
		\begin{equation} \label{eei}
		E^{2}I(w|q) = |\sum_{k = 1}^{l} P(A(w) = a_{k}|q) \cdot EI(A(w) = a_{k}|q)|
		\end{equation}
\end{definition}

Equation \ref{eei} calculates the expectation of the easiness score increase ($EI$) that a worker brings to a question. $P(A(w) = a_{k}|q)$ means the probability of worker $w$ giving the answer $a_{k}$ to question $q$, and $EI(A(w) = a_{k}|q)$ is the corresponding $EI$ when $w$ votes $a_{k}$. For each possible answer, these two terms are multiplied and accumulated, finally we get the absolute value of the result. The reason why we take the absolute value is that $EI(A(w) = a_{k}|q)$ can be negative when the $sc.d(q)$ decreases. Without knowing the correct answer of $q$, the answer that shows its advantage in determining the absolute sum, namely $E^{2}I(w|q)$, is regarded as the most possible answer given by $w$. Therefore, $E^{2}I$ represents a fair expectation measurement for every worker since no default truth is assumed.

Next, we show how to calculate the term $P(A(w) = a_{k}|q)$ in Equation \ref{eei}. Let $t$ denote the true answer of question $q$, then Equation \ref{eei} can be calculated as the following:
\begin{equation}
P(A(w) = a_{k}|q) = \sum_{r=1}^{l} P(A(w) = a_{k}|t = r) \cdot P(t = r),
\end{equation}
where $P(t=r)$ is probability of answer $r$ being the truth, which is exactly the answer confidence $c(r)$. And in order to find $P(A(w) = a_{k}|t = r)$, we assume the worker selection probability equals to $p = \frac{1}{1+\mathrm{e}^{-e(w) \cdot d(q)}}$ \cite{whitehill2009whose}, which means that worker expertise and question easiness can jointly affect worker $w$ to select a correct answer to question $q$. Therefore, we have: $P(A(w) = a_{k}|t = r) = p$, if $a_{k}= r$, meaning the worker answer $a_{k}$ happens to be the truth $r$; otherwise, $P(A(w) = a_{k}|t = r) = \frac{1-p}{l-1}$, that is, worker has a equal probability $\frac{1-p}{l-1}$ to choose a wrong answer.

As mentioned in Section \ref{intro}, the \textbf{easiness score threshold} $\delta$ is provided for the requester to control the accuracy of answers and reduce their cost. Questions with easiness score equal to or exceeding $\delta$ will be returned from the question pool at the end of each batch. Therefore, if both the current easiness score $sc.d(q)$ of question $q$ and $\delta$ are known, the remaining easiness score of $q$, denoted by $c$, is equal to $\delta - sc.d(q)$.

Then we formally propose our optimization problem as the following.
\begin{definition}[\textit{\textbf{the Maximizing Completions of Questions (MCQ) Problem}}] \label{mcq}
	Given a worker set $W = \big\{w_{1}, w_{2}, \dots, w_{n}\big\}$ of size $n$ and a question set $Q = \big\{q_{1}, q_{2}, \dots, q_{m}\big\}$ of size $m$ in the current batch. Let $c_{j}$ denote the remaining easiness score of $q_{j}$ and $E^{2}I_{ij}$ denote the expected easiness score increase of worker $w_{i}$ to question $q_{j}$. To maximize the number of completed questions and make the most efficient utilization of the worker labor, our task assignment scheme $I$ aims to assign each worker to \textit{one} question in order that \textit{all workers can be packed into a question set $Q_{T}$ and $|Q_{T}|$ is minimized}. Note that minimizing $|Q_{T}|$ means to tightly assign all available workers to the least number of questions in order that the most number of questions can be finished in each batch.
\end{definition}

\noindent\textbf{NP-hardness of the MCQ problem.} We prove that the MCQ problem is NP-hard by reducing it from the Bin Packing problem \cite{michael1979computers}. 
\begin{theorem}\label{theorem:nphard]}
	The MCQ problem is NP-hard. 
\end{theorem}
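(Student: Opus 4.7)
The plan is to prove NP-hardness of MCQ by a polynomial-time reduction from the classical \emph{Bin Packing} problem, which the statement already references as the intended source. Recall the decision version of Bin Packing: given $n$ items with positive sizes $s_{1},\ldots,s_{n}$, a uniform bin capacity $C$, and an integer $K$, decide whether the items can be packed into at most $K$ bins so that the total size placed in each bin does not exceed $C$. This problem is classically NP-hard, so proving that any polynomial-time algorithm for MCQ would solve Bin Packing suffices.

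Given an arbitrary Bin Packing instance, I would construct the following MCQ instance. Introduce $n$ workers $w_{1},\ldots,w_{n}$, one per item, and $n$ questions $q_{1},\ldots,q_{n}$, each with remaining easiness score $c_{j}=C$ (enough candidate bins). Set $E^{2}I_{ij}=s_{i}$ for every $i,j$, so that a worker's expected easiness-score increase is independent of the question she is assigned to, mirroring the fact that an item has a single size in Bin Packing. This construction has size linear in the input and is clearly computable in polynomial time.

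Equivalence then follows directly. Any feasible MCQ assignment is a map $I:W\to Q$ that avoids overuse: for each $q_{j}$, $\sum_{w_{i}\in W(q_{j})}E^{2}I_{ij}\leq c_{j}$. Substituting the construction, this becomes $\sum_{w_{i}\in W(q_{j})}s_{i}\leq C$, exactly the capacity constraint if bin $j$ receives item $i$ whenever $I(w_{i})=q_{j}$. The MCQ objective $|Q_{T}|$ equals the number of nonempty bins. Therefore the Bin Packing instance admits a packing with at most $K$ bins if and only if the constructed MCQ instance admits an assignment with $|Q_{T}|\leq K$, and NP-hardness transfers.

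The main subtlety I expect to wrestle with is making the feasibility model of MCQ explicit. Definition \ref{mcq} states the objective (``pack all workers into a question set $Q_{T}$ with $|Q_{T}|$ minimized'') but only informally conveys the hard constraint via the motivating example, where Case~3 is rejected as ``overuse''. To make the reduction airtight I would first extract from that example the implicit capacity condition $\sum_{w_{i}\in W(q_{j})}E^{2}I_{ij}\leq c_{j}$; once this constraint is on the page, the above reduction is essentially routine. A secondary concern is verifying that the $E^{2}I_{ij}$ values produced by Equation~\ref{eei} are flexible enough to realize arbitrary rational item sizes; since Equation~\ref{eei} imposes no structural restrictions beyond nonnegativity, this is straightforward.
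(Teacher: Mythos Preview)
Your proposal is correct and follows essentially the same reduction as the paper: treat workers as items with $E^{2}I_{ij}=s_{i}$ independent of $j$, treat questions as identical-capacity bins with $c_{j}=C$, and observe that minimizing $|Q_{T}|$ under the overuse constraint coincides with minimizing the number of used bins. Your version is slightly more explicit about the feasibility constraint and the decision-vs-optimization framing, but the underlying argument is identical.
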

\begin{proof}
	Specifically, we show the proof in detail by a reduction from the Bin Packing problem. A Bin Packing problem can be described as follows: given a set of bins $s_{1}, s_{2}, \dots, s_{m}$ with the same size $V$ and a list of $n$ items with size $a_{1}, a_{2}, \dots, a_{n}$, the problem is to find the minimum number of bins $B$ and a B-partition $s_{1} \cap s_{2} \cap \dots \cap s_{B}$  \textit{s.t.} $\sum_{i \in S_{k}} a_{i} \leq V$.

	For a given Bin Packing problem instance, we can transform it to an instance of MCQ as follows: we set $E^{2}I_{ij} = a_{i}$ and $c_{j} = V$, which means workers and questions are treated as items and bins accordingly, and workers have the same $E^{2}I$ over all questions. According to the MCQ problem, our target is to assign all workers to questions in order that the number of assigned questions is minimized, which is equivalent to minimize the number of used bins of equal size under the constraint that all items should be packed into bins. Given this mapping, it is easy to show that Bin Packing Problem instance can be solved if and only if the transformed MCQ problem can be solved. Therefore, the optimal solution of the MCQ problem reveals the optimal solution of the Bin Packing Problem. Therefore, the MCQ problem is NP-hard.
\end{proof}

\vspace{-2ex}
\section{Assignment Algorithms} \label{solution}
In this section, we propose two greedy algorithms to address the MCQ problem and study their approximation ratios. Before that, we first illustrate some preliminary concepts.

\subsection{Preliminary Concepts}
We now define the following terms which will be commonly used in the approximation algorithms. 
\begin{algorithm}[t]

		\caption{\small{First Match Algorithm}}
		\label{alg:fm}
		\KwIn{$E^{2}I$, $W$, $Q$ and $C=c_{1}, c_{2},\ldots, c_{n}$}
		\KwOut{$Q_{T}$, question assignment scheme I}
		\For {$w_{i} \in W$}
		{
			\For {$q_{j} \in Q$    }
			{
				Let $U_{i,j} = c_{j} - E^{2}I_{i,j}$, the remaining $c_{j}$ after $w_{i} \rightarrow q_{j}$;
			}
		}
		Find $U_{i',j'}$ = min $U_{i,j}$, $\forall w_{i} \in W, \forall q_{j} \in Q$;\\
		Let $w_{i'}$ be $w_{1}$, $I = \{w_{1} \rightarrow q_{j'}\}$; \\
		Let $Q_{o}$ denote the set of open questions sorted by the insertion order;\\
		$Q_{o} = \{q_{j'}\}$;\\
		Randomly order the workers from $w_{2}$ to $w_{n}$;\\
		$\forall i > 1$, update $U_{i,j'} = U_{i,j'} - E^{2}I_{1,j'} $;\\
		\For {$w_{i} \in W-\{w_{1}\}$}
		{
			\eIf{$\forall q_{j} \in Q_{o}$, $U_{i,j} < 0 $}{
				Let $U_{i,k}$ = min $U_{i,j}$, $\forall q_{j} \in Q-Q_{o}$;\\
				Open the closed question $q_{k}$, set $Q_{o} = Q_{o} \cup \{q_{k}\}$; \\
			}{
				Find the first $q_{k} \in Q_{o}$ \textit{s.t.} $U_{i,k} \geq 0$;\\
			}
			$I = I \cup \{w_{i} \rightarrow q_{k}\}$;\\
			$\forall z > i$, update $U_{z,k} = U_{z,k} - E^{2}I_{i,k} $;\\
		}    
		$Q_{T} = Q_{o}$  
\end{algorithm}

\begin{itemize}[leftmargin=*]
    \item{\textbf{Feasible task assignment.}} A question assignment scheme $I$ is called feasible if $\forall q_{j}$, the total $E^{2}I$ of its assigned workers $W(q_{j})$ does not exceed its remaining easiness score, which means $\sum_{w_{i} \in W(q_{j})} E^{2}I_{ij} \leq c_{j}$.
    \item{\textbf{Open/closed questions.}} A question is called open when it is assigned to workers and is called closed if no workers are assigned to it.
\end{itemize}

Next, we introduce two algorithms, namely, the \textit{First Match Algorithm} and \textit{Best Match Algorithm}. And their time complexities are further analyzed.
\vspace{-2ex}
\subsection{First Match Algorithm }
The whole procedure of First Match is illustrated in Algorithm \ref{alg:fm}. Note that our target is to assign each worker to one question $s.t.$ the final question set $|Q_{T}|$ is minimized. In order to satisfy this target, the main idea of the First Match Algorithm is that whenever we consider a new worker, we simply look at the open questions and assign him to the First Matching open question without exceeding its remaining easiness score. If the $EI$ of this worker cannot fit into all open questions, at this time we open a new closed question. First match controls the number of open questions by adopting a lazy assignment, where it either assigns a worker to the First Matching open question or conservatively opens a new question when all open questions cannot hold the $EI$ of this worker.

We use the symbol $\rightarrow$ to denote the task assignment operation. In lines 1-3, $U_{i,j}$ denotes the remaining easiness score of question $q_{j}$ if answered by $w_{i}$. In lines 4-5, we open the first question by finding the smallest $U_{i',j'}$, then $w_{i'}$ is set to $w_{1}$ and the assignment $w_{1} \rightarrow q_{j'}$ is added to the assignment scheme $I$. In addition, $q_{j'}$ is added to the set of open questions $Q_{o}$. For the remaining workers $w_{i} \in W-\{w_{1}\}$ where $i>1$, we randomly index them and update their $U_{i,j'}$ (lines 8-9). For each worker $w_{i}$, the First Match algorithm then opens a question only if the current worker $w_{i}$ cannot fit in all previous open questions $Q_{o}$ and it selects a new open question that has the least remaining easiness score if answered by $w_{i}$ (lines 11-13). Otherwise, if $w_{i}$ can fit into more than one open questions in $Q_{o}$, he will be assigned to the question with the lowest index, that is, the question opened the earliest (lines 14-15). Finally, the assignment is added to $I$ and after the assignment, $U$ is updated (lines 16-17).

\noindent\textbf{The Time Complexity.}
We need $O(mn)$ steps to initialize $U_{i,j}$ and at most $O(mn)$ steps to figure out the min $U_{i',j'}$ respectively (line 1-4). In lines 10-15, for each worker $w_{i}$, we need $O(m)$ time to find either the min $U_{i,k}$ ($q_{k} \in Q - Q_{o}$, that is the set of closed questions) or the first open question $q_{k} \in Q_{o}$. And updating $U_{z,k} $, $\forall z > i$ requires additional $O(n)$ time cost in the worst case since the worker set $W$ of size $n$ is looped. Totally, the time complexity of the First Match Algorithm is $O(mn + mn + n(m+n))$, which is equal to $O(n^2 + nm)$.

\vspace{-2ex}
\subsection{Best Match Algorithm }
As mentioned above, the First Match Algorithm adopts a lazy question assignment scheme where a worker is always assigned to the first feasible question $q_{j}$ in the set of open questions $Q_{o}$. However, there exists a case that if $w_{i}$ is assigned to some other question $q_{k}$ in $Q_{o}$, the remaining easiness score of $q_{k}$ is less than that of $q_{j}$, namely, $0 \leq c_{k}-E^{2}I_{i,k} < c_{j}-E^{2}I_{i,j}$, and $k>j$. It is highly possible that we can yield more completed questions at the end of this batch if $w_{i} \rightarrow q_{k}$. However, First Match does not consider this situation.

In order to select the questions more carefully for the workers and further increase the number of potentially completed questions, we propose another approximation algorithm, namely Best Match Algorithm, as shown in Algorithm \ref{alg:bm}.

\begin{algorithm}[t]

        \caption{\small{Best Match Algorithm}}
        \label{alg:bm}
        \KwIn{$E^{2}I$, $W$, $Q$ and $C=c_{1}, c_{2},\ldots, c_{n}$}
        \KwOut{$Q_{T}$, question assignment scheme I}
        \For {$w_{i} \in W$}
        {
            \For {$q_{j} \in Q$    }
            {
                Let $U_{i,j} = c_{j} - E^{2}I_{i,j}$, the remaining $c_{j}$ after $w_{i} \rightarrow q_{j}$;
            }
        }
        $\forall w_{i} \in W, \forall q_{j} \in Q$, find $U_{i',j'}$ = min $U_{i,j}$;\\
        Let $w_{i'}$ be $w_{1}$, $I = \{w_{1} \rightarrow q_{j'}\}$; \\
        Let $Q_{o}$ denote the set of open questions;\\
        $Q_{o} = \{q_{j'}\}$;\\
        Randomly order the workers from $w_{2}$ to $w_{n}$;\\
        $\forall i > 1$, update $U_{i,j'} = U_{i,j'} - E^{2}I_{1,j'} $;\\
        \For {$w_{i} \in W-\{w_{1}\}$}
        {
            \eIf{$\forall q_{j} \in Q_{o}$, $U_{i,j} < 0 $}{
                Let $U_{i,k}$ = min $U_{i,j}$, $\forall q_{j} \in Q-Q_{o}$;\\
                Open the closed question $q_{k}$, set $Q_{o} = Q_{o} \cup \{q_{k}\}$; \\
                $I = I \cup \{w_{i} \rightarrow q_{k}\}$;\\
                $\forall z > i$, update $U_{z,k} = U_{z,k} - E^{2}I_{i,k} $;\\
            }{
                Find the best question $q_{z}$ $ s.t.$ $U_{i,z}$ = min $U_{i,j}$, $\forall q_{j} \in Q_{o}$;\\
                $I = I \cup \{w_{i} \rightarrow q_{z}\}$;\\
                $\forall v > i$, update $U_{v,z} = U_{v,z} - E^{2}I_{v,z} $;\\
            }
        }   
        	$Q_{T} = Q_{o}$ 
\end{algorithm}

The second algorithm, the Best Match algorithm, is similar to the First Match algorithm in lines 1-15. However, the main difference is that when $w_{i}$ can fit in more than one open question in $Q_{o}$, $w_{i}$ will be assigned to the question with the smallest remaining score incurred by him. This is revealed in lines 16-19, where the best question is first found by performing a traversal of the remaining easiness scores over the set of open questions $Q_{o}$ and then finding the minimum $U_{i,z}$. Therefore, worker $w_{i}$ is assigned to question $q_{z}$ in line 18. After that, all affected $U$ is updated in line 19.

\noindent\textbf{The Time Complexity.}
Similar with the First Match algorithm, we need $O(mn)$ steps to initialize $U_{i,j}$ and at most $O(mn)$ steps to figure out the min $U_{i',j'}$ respectively (lines 1-4). In lines 10-19, for each worker $w_{i}$, we need $O(m)$ time to find either the min $U_{i,k}$ ($q_{k} \in Q - Q_{o}$, which is the set of closed questions) (lines 12-14) or the min $U_{i,z}$ ($q_{z} \in Q_{o}$, which is the set of open questions) (lines 17-18). Furthermore, updating affected $U$ requires additional $O(n)$ time cost in the worst case since the worker set $W$ of size $n$ is looped. To sum up, the time complexity of the Best Match Algorithm is $O(mn + mn + n(m+n))$, which is equal to $O(n^2 + nm)$. The time complexities of the two algorithms are identical.

\subsection{The Approximation Ratios}
Next, we study the approximation ratios of the \textit{First Match Algorithm} and \textit{Best Match Algorithm} together since we prove that they have the same approximation ratios. We use \textit{XM} to represent both of them for the convenience of description.

Note that in Definition \ref{mcq}, our MCQ problem minimizes $|Q_{T}|$.

\begin{lemma} \label{lemma1}
Index the questions of $Q_{T}$ in the order opened by XM. Consider the $j$th ($j \geq 2$) question, any worker that was assigned to it cannot fit into questions opened prior to $q_{j}$.
\end{lemma}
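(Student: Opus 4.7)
The plan is to fix $j \geq 2$, pick an arbitrary worker $w_i$ with $w_i \rightarrow q_j$ in the scheme $I$ produced by XM, and trace the exact moment at which XM made that assignment. The structural fact I would lean on is that under both First Match and Best Match, questions enter $Q_o$ only through the then-branch of the main loop and are never removed from it, so if the elements of $Q_T$ are indexed in the order XM opens them, then $q_1,\ldots,q_{j-1}$ are already sitting in $Q_o$ at every moment from the instant $q_j$ is opened onwards. The proof then splits into two cases according to whether $w_i$ is the worker whose arrival opened $q_j$, or a later worker whose arrival found $q_j$ already in $Q_o$.

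In the first case the conclusion is essentially by inspection of the pseudocode: XM enters the then-branch for $w_i$, whose guard is $U_{i,k}<0$ for every $q_k \in Q_o$. Since $\{q_1,\ldots,q_{j-1}\}\subseteq Q_o$ at that moment, this gives $U_{i,k}<0$ for every $k<j$, which is exactly the statement that $w_i$ cannot fit into those prior questions. In the second case $q_j$ was already in $Q_o$ when $w_i$ was processed, so XM took the else-branch. For First Match the else-branch deterministically picks the smallest-index feasible $q_k \in Q_o$; if that choice is $q_j$, then every $q_k$ with $k<j$ must have failed the feasibility test $U_{i,k}\geq 0$ at that instant, and the claim follows at once.

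I expect the main obstacle to be the else-branch of Best Match, because it picks the tightest-fit open question rather than the lowest-index one; a priori $w_i$ could be routed to $q_j$ while still being feasible for some earlier $q_k$. To close this gap I would argue via an invariant on the opening chain: $q_j$ was first opened by some earlier worker $w^*$ for whom the residuals of $q_1,\ldots,q_{j-1}$ were all strictly less than the $E^{2}I$ of $w^*$ on those questions, and residuals only decrease as XM proceeds; combined with a comparison of the contributions of the later worker $w_i$ against those of $w^*$---or, failing that, by reading the lemma as the weaker property that Best Match's tight-fit rule did not route $w_i$ into any $q_k$ with $k<j$, which is the form actually used in the ensuing approximation-ratio analysis---the conclusion should carry over to Best Match as well.
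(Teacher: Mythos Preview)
Your treatment of First Match is essentially the paper's argument, just structured more carefully: the paper argues by contradiction that if $w$ were assigned to $q_j$ while still fitting in some earlier $q_k$, First Match's lowest-index rule would have sent $w$ to $q_k$ instead. Your two-case split (the worker who opens $q_j$ versus a later worker routed to it) makes the same point with more precision, and both cases go through for First Match exactly as you describe.

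Where you diverge from the paper is in the Best Match case, and here you are \emph{right to worry}. The paper disposes of Best Match in one line (``similar to the First Match and is omitted here''), but as you observe, Best Match's else-branch selects the open question with the smallest residual $U_{i,z}$, not the smallest index. Nothing prevents a worker $w_i$ from being feasible for an earlier $q_k$ while having a tighter fit in the later $q_j$; in that situation Best Match sends $w_i$ to $q_j$ even though $w_i$ fits in $q_k$, which directly contradicts the lemma as stated. A small instance makes this concrete: open $q_1$ with residual $6$, then open $q_2$ with residual $15$; now let $w_i$ arrive with $E^2I_{i,1}=5$ and $E^2I_{i,2}=14.5$. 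Both questions are feasible, but $U_{i,2}=0.5<1=U_{i,1}$, so Best Match assigns $w_i\rightarrow q_2$ while $w_i$ still fits in $q_1$.

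Neither of your proposed repairs closes this. The ``residuals only decrease'' invariant tells you something about the worker $w^{*}$ who opened $q_j$, but $E^{2}I$ values are worker-specific, so the infeasibility of $w^{*}$ on $q_1,\ldots,q_{j-1}$ says nothing about a later $w_i$ with smaller contributions there. And your fallback---reinterpreting the lemma as the tautology that Best Match did not in fact route $w_i$ to an earlier question---is not what the downstream argument needs: the Helper Procedure relies on the claim that moving any worker from $q_j$ back to an earlier question forces an overuse, which is exactly the ``cannot fit'' statement. So the gap you flagged is real, it is present in the paper's own proof as well, and the lemma as written does not hold for Best Match without an additional hypothesis or a modified tie-breaking rule.
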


\begin{proof}
    We show the proof of Lemma \ref{lemma1} by contradiction. Suppose there is a question $q_{k}$, where $q_{k} \in Q_{T}$ and $k < j$. Assume worker $w$ can be fit into $q_{k}$ and is assigned to $q_{j}$ for First Match Algorithm. However, if there is more than one question in $Q_{T}$ that can hold $w$, the First Match will pick the first satisfied question. Since $k < j$, $q_{k}$ should be selected to $w$, which forms a contradiction. Thus, Lemma \ref{lemma1} is proved under the First Match. The proof of Best Match algorithm is similar to the First Match and is omitted here.
\end{proof}

\begin{algorithm}[t]
        \caption{\small{The Helper Procedure}}
        \label{alg:HP}
        \For{$i = 1$ to $v-1$}
        {Let $X_{j'}$ be the nonempty set with the highest index;\\
            \eIf{$j' = i$}{ Stop.}
            {Let $w$ be the worker with smallest $E^{2}I$ to $q_{j'}$;\\
                Set $X_{i} \leftarrow X_{i} \cup \{w\}$;
                and $X_{j'} \leftarrow X_{j'} \setminus \{w\}$;}
        }
\end{algorithm}

In order to continue finding the approximation ratios, we divide workers and questions of $Q_{T}$ into two types respectively as:
\begin{itemize}[leftmargin=*]
\item{\textbf{Experts/normal workers.}} A worker $w_{i}$ is called an expert if his $E^{2}I$ is greater than the half remaining easiness score of all open questions, which means $\forall q_{j} \in Q_{T}$, $E^{2}I_{ij} \geq \frac{c_{j}}{2}$, where $Q_{T}$ refers to the set of open questions. Otherwise, he is a normal worker.
\item{\textbf{Type N/E questions.}} A question is defined to be Type N if it is only assigned to normal workers and is of Type E if it is not Type N, namely, it has at least one assigned expert.
\end{itemize}

We know that XM can ultimately get $|Q_{T}|$ questions assigned. For a given integer $v$, $2 \leq v \leq |Q_{T}|$, select $v$ questions from $Q_{T}$ and index them in the order that opened by XM. Let $X_{j}$ be the set of workers assigned by XM to the $j$th question, $j=1, 2, \dots, v$. 

Then we manually partition $Q_{T}$ produced by XM into two sets. The first set includes only Type N questions, and the second set includes the remaining questions produced by XM. Assume there are $c$ experts among the workers. Since every worker is assigned to one question, then $|Q_{T}|-c$ is the number of Type N questions and $c$ is the number of Type E questions. Index the questions in the first set in the order they are opened, from $1$ to $|Q_{T}|-c$.

Based on Lemma \ref{lemma1}, we construct a lower bound on the optimal $|Q_{T}|$ with the help of Algorithm \ref{alg:HP}: the Helper Procedure. In the Helper Procedure, workers assigned to latter questions in $Q_{T}$ are picked and re-assigned to former questions (line 6-7). However, according to Lemma \ref{lemma1}, workers are not able to fit into previous questions in First Match or Best Match. Therefore, every assignment in the Helper Procedure will cause the overused situation of some question $q_{j}$, namely, the joint $E^{2}I$ of workers assigned to $q_{j}$ is larger than its remaining easiness score $c_{j}$.

So we let $v = |Q_{T}|-c$ and apply the Helper Procedure to the set of Type N questions. Assume the Helper Procedure can produce $m$ questions, there are at least $m-1$ questions are overused. This can be easily derived. If the number of overused questions is less than $m-1$, say, equal to $k$, the Helper Procedure tends to pick another worker who was assigned to the $m^{th}$ question and re-assign him to $(k+1)^{th}$ question which will be overused. And this process will continue until no former questions before the $m^{th}$ question are available, which means the number of overused questions is at least $m-1$.

With the help of the Helper Procedure, Lemma \ref{lemma2} is proposed as follows:

\begin{lemma} \label{lemma2}
    $\frac{c}{2} + m^{*} - 1 < |Q^{*}_{T}|$, where $Q^{*}_{T}$ is the optimal solution and $m$ is produced by the Helper Procedure on the Type N questions of $Q^{*}_{T}$.
\end{lemma}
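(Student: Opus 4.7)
The plan is to decompose the optimal schedule into its Type~E and Type~N parts and lower-bound each count separately. Writing $|Q^*_T| = n^*_E + n^*_N$, where $n^*_E$ counts questions in $Q^*_T$ that receive at least one expert and $n^*_N$ counts those assigned only normal workers, the lemma reduces to showing $n^*_E \geq c/2$ and $n^*_N \geq m^*$; summing these two inequalities yields $|Q^*_T| \geq c/2 + m^*$, which is strictly greater than $c/2 + m^* - 1$.

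For the Type~E bound I would use a direct capacity argument. If $k$ experts were all assigned to the same question $q_j \in Q^*_T$, their combined easiness-score contribution would be at least $k \cdot c_j/2$ by the expert definition, whereas feasibility of $Q^*_T$ forces the total contribution at $q_j$ to be at most $c_j$. Hence $k \leq 2$, and because every one of the $c$ experts must land in some Type~E question of $Q^*_T$, this immediately yields $n^*_E \geq c/2$.

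The Type~N bound follows from the mechanics of Algorithm~\ref{alg:HP} itself. Each iteration transfers a single worker from an existing nonempty set $X_{j'}$ to an earlier set $X_i$; no new index is ever introduced, so the collection of indices still nonempty at termination is a subset of the initial indexing $\{1,\dots,n^*_N\}$. Consequently the Helper's output count $m^*$ is at most its input count $n^*_N$, giving $n^*_N \geq m^*$. Combining with the Type~E bound delivers $|Q^*_T| \geq c/2 + m^*$ as required.

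The step I expect to require the most care is the cross-over of the expert definition from XM's schedule to the optimal one. In the preceding text an expert was characterized by $E^{2}I_{ij} \geq c_j/2$ quantified over $q_j \in Q_T$, but that inequality really depends only on the intrinsic capacity $c_j$ of question $q_j$ and not on which algorithm opens it; I therefore need to state explicitly that the expert condition extends to every $q_j$ that the optimal may choose to open, so that the ``at most two experts per question'' step can be applied inside $Q^*_T$ without ambiguity.
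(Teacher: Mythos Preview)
Your proposal is correct and follows essentially the same decomposition as the paper: split $Q^{*}_{T}$ into Type~E and Type~N questions, lower-bound each count, and add. The paper argues slightly more tightly on the Type~E side: reading the expert condition as a strict inequality $E^{2}I_{ij} > c_j/2$, it concludes that \emph{no two} experts can share a question in any feasible schedule, so the Type~E count equals $c$ exactly (and hence $|Q^{*}_{T}| = c + y > c + m^{*}-1 > \tfrac{c}{2}+m^{*}-1$). Your weaker ``at most two experts per question'' bound, based on the non-strict reading $E^{2}I_{ij} \geq c_j/2$, still yields $n^{*}_{E}\geq c/2$ and is entirely sufficient for the stated inequality; the paper simply throws away the extra $c/2$ at the end anyway. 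Your observation that the Helper Procedure never creates new indices, so $m^{*}\leq n^{*}_{N}$, matches the paper's $m^{*}-1<y$. Finally, the concern you flag about the expert definition quantifying only over $q_j\in Q_T$ rather than over all questions is a genuine ambiguity in the paper's setup; the paper applies the expert property inside $Q^{*}_{T}$ without comment, so your plan to make this step explicit is an improvement in rigor rather than a deviation.
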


\begin{proof}
    Assume the optimal solution of the MCQ problem finally achieves $Q_{T}^{*}$ as the final set of assigned questions. $Q_{T}^{*}$ is divided into two sets: the first set contains the Type N questions while the second set contains the Type E question. According to the above, the size of Type E questions is exactly $c$. And assume there are $y$ questions of Type N. We apply the Helper Procedure to the first set and obtain $m^{*}$ questions with at least $m^{*}-1$ questions are overused, where $m^{*}-1 < y$. Furthermore, from the definition of \textit{experts}, we know that there exist no two experts that can be assigned to one identical question. Therefore, since $|Q^{*}_{T}| = c + y$, and $m^{*}-1 < y$, we then yield $|Q^{*}_{T}| > c+ m^{*}-1 > \frac{c}{2}+ m^{*}-1$. 
\end{proof}

\begin{theorem}
    XM Algorithms have approximation ratios of $2 + \frac{2}{|Q^{*}_{T}|}$.
\end{theorem}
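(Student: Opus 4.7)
The plan is to partition XM's output $Q_{T}$ into Type E and Type N questions, apply the Helper Procedure to the Type N part to extract a pivot quantity $m$, and then bound $|Q_{T}|$ above and $|Q^{*}_{T}|$ below in terms of the same $m$. I write $|Q_{T}| = c + y$, where $c$ is the number of Type E questions---by the definition of ``expert,'' two experts sharing one question would have combined $E^{2}I$ reaching $c_{j}$, so each Type E question hosts exactly one expert and $c$ also equals the total number of experts---and $y$ is the number of Type N questions opened by XM.

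First, I would index XM's Type N questions $1,2,\ldots,y$ in the order they were opened and run the Helper Procedure on them. Each non-terminating iteration $i$ moves exactly one worker from the current largest-index nonempty set $X_{j'}$ (necessarily $j'>i$) into $X_{i}$, and the procedure halts at the first $i$ with $j'=i$; call this stopping index $m$. At that point $m-1$ workers have been relocated, $X_{1},\ldots,X_{m}$ are nonempty, and $X_{m+1},\ldots,X_{y}$ have all been emptied. Since each of the $y-m$ emptied tail sets was originally nonempty, one relocated worker can be charged against each, giving $y-m\le m-1$, i.e.\ the key bound $y\le 2m-1$. Moreover, by Lemma \ref{lemma1} every such relocation places a worker into a strictly earlier question that it could not fit into during XM's run, so each relocation forces overuse---which is precisely the hypothesis driving Lemma \ref{lemma2}.

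Next, I would invoke Lemma \ref{lemma2} with this $m$ to obtain $|Q^{*}_{T}| > \tfrac{c}{2}+m-1$, equivalently $c+2m \le 2|Q^{*}_{T}|+2$. Chaining with the previous step,
\begin{equation*}
|Q_{T}| \;=\; c + y \;\le\; c + (2m-1) \;\le\; 2|Q^{*}_{T}| + 1 \;\le\; 2|Q^{*}_{T}| + 2,
\end{equation*}
and dividing both sides by $|Q^{*}_{T}|$ delivers the claimed ratio $2 + 2/|Q^{*}_{T}|$.

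The main obstacle is twofold. First, the counting step $y\le 2m-1$ needs care: I must rule out the possibility that some interior $X_{i}$ with $i\le m$ silently empties during the run, which would invalidate the charge of ``one relocated worker per originally-nonempty tail set.'' This is handled by observing that the procedure always draws from the largest-index nonempty set and that iteration $i$ leaves $X_{i}$ nonempty from then on, so tail sets deplete strictly in order $y, y-1, \ldots$ Second, I must align the $m$ extracted from XM's Type N part with the quantity $m^{*}$ appearing in Lemma \ref{lemma2}; concretely, I would argue that the lower bound in Lemma \ref{lemma2} is driven exactly by the overuse forced by Lemma \ref{lemma1} on XM's sets, so substituting our $m$ into Lemma \ref{lemma2} is legitimate and the chain above closes cleanly.
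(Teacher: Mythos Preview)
Your proposal is essentially the paper's own argument: partition $Q_{T}$ into Type~E and Type~N questions, run the Helper Procedure on the Type~N part to extract $m$, derive $|Q_{T}|-c\le 2m$, and combine with Lemma~\ref{lemma2} to close the ratio. The second obstacle you flag---reconciling your $m$ (from XM's Type~N questions) with the $m^{*}$ of Lemma~\ref{lemma2}---is exactly the notational slippage present in the paper itself, whose proof of the theorem silently uses the symbol $m^{*}$ for the quantity obtained from XM's output when it writes $m^{*}\ge (|Q_{T}|-c)/2$; your counting argument for $y\le 2m-1$ is in fact a slightly cleaner justification of that same step.
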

\begin{proof}
    As mentioned previously, the Helper Procedure produces $m^{*}$ questions with at least $m^{*}-1$ overused, which means at least $m^{*}-1$ questions contain 2 workers. Therefore, we have $m^{*} \geq \frac{|Q_{T}|-c}{2}$ or $|Q_{T}| - m^{*} - c \leq m^{*}$. Combining Lemma \ref{lemma2}, we have the following:\vspace{-2ex}
    \begin{equation}
    \begin{split}
    &|Q_{T}| - m^{*} - c  + (\frac{c}{2}-1)\leq m^{*} + (\frac{c}{2}-1) < |Q^{*}_{T}|\\
    \Leftrightarrow \quad &|Q_{T}| < |Q^{*}_{T}| + m^{*} + \frac{c}{2} + 1 < |Q^{*}_{T}| + (m^{*} + \frac{c}{2} -1) + 2\\
    \Leftrightarrow \quad &|Q_{T}| < 2|Q^{*}_{T}| + 2\\
    \Leftrightarrow \quad &\frac{|Q_{T}|}{|Q^{*}_{T}|} < 2 + \frac{2}{|Q^{*}_{T}|},
    \end{split}\notag
    \end{equation}
where the first line is obtained by adding $\frac{c}{2}-1$ to both sides of formula $|Q_{T}| - m^{*} - c \leq m^{*}$. 
\end{proof}
\vspace{-2ex}

\section{Experiments} \label{experiment}

We extensively evaluate our MCQ approaches on both synthetic and real datasets. The results are presented and analyzed in this section. Before presenting the results, we first introduce the datasets we use.
\begin{figure*}[h] \centering
	\subfigure[Accuracy] { \label{fig:syn_m_accuracy_0.2}
		\includegraphics[height = 1.2in,width=0.37\columnwidth]{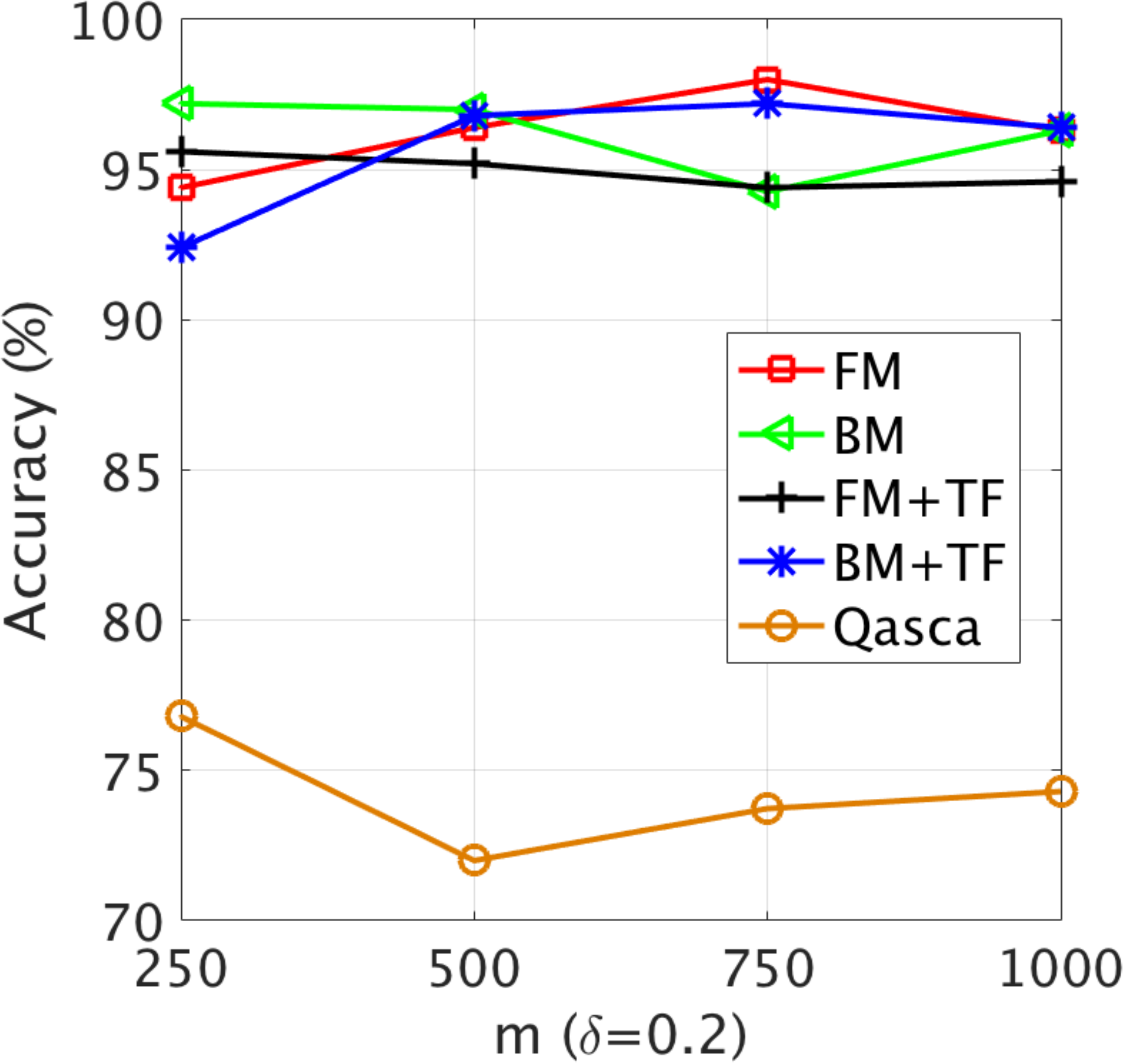}
	}	
	\subfigure[Accuracy] { \label{fig:syn_m_accuracy_0.3}
		\includegraphics[height = 1.2in,width=0.37\columnwidth]{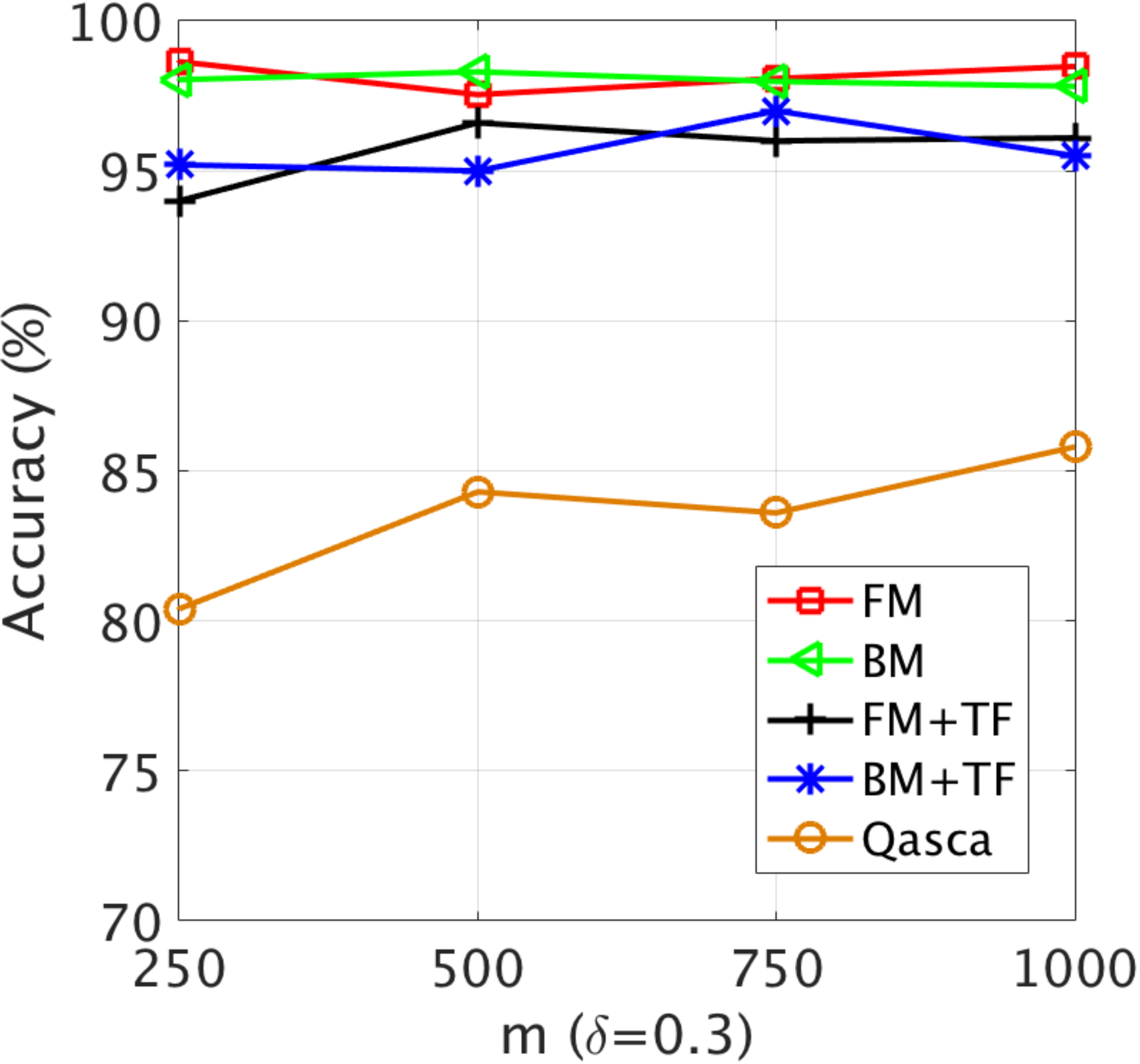}
	}
	\subfigure[Accuracy] { \label{fig:syn_m_accuracy_0.4}
	\includegraphics[height = 1.2in,width=0.37\columnwidth]{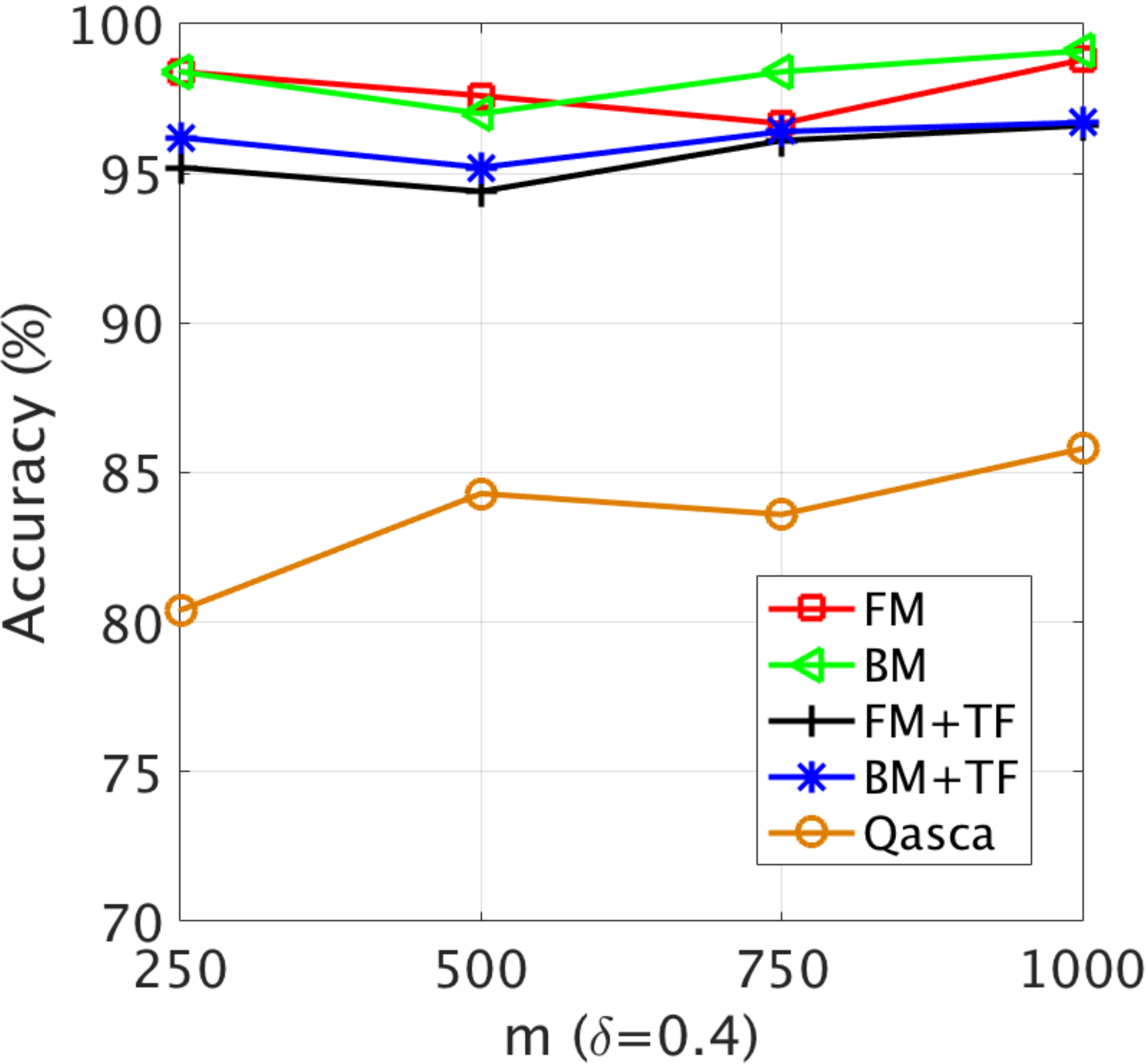}
	}
	\subfigure[Accuracy] { \label{fig:syn_m_accuracy_0.5}
	\includegraphics[height = 1.2in,width=0.37\columnwidth]{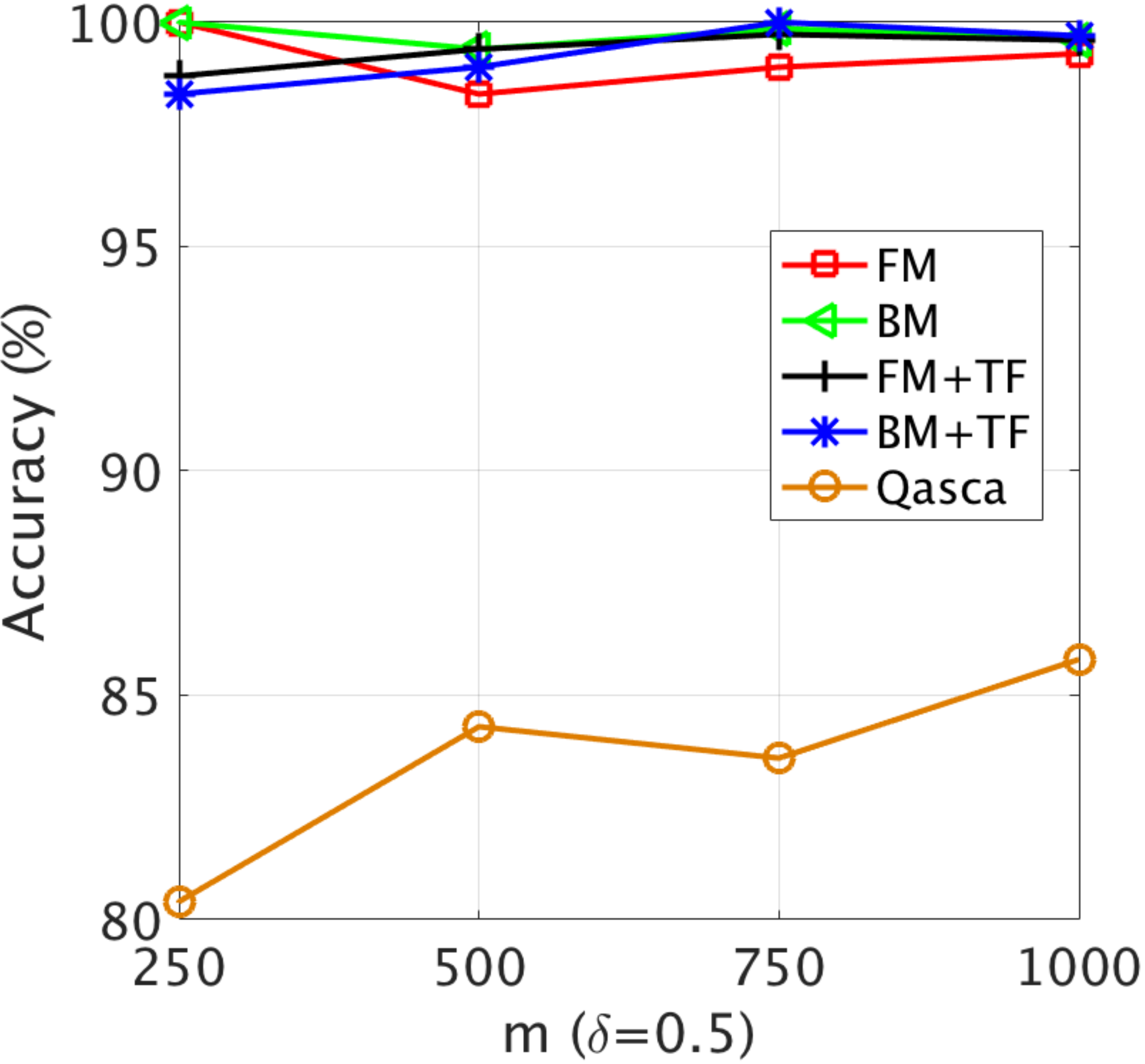}
	}
	\subfigure[Accuracy] { \label{fig:syn_m_accuracy_0.6}
	\includegraphics[height = 1.2in,width=0.37\columnwidth]{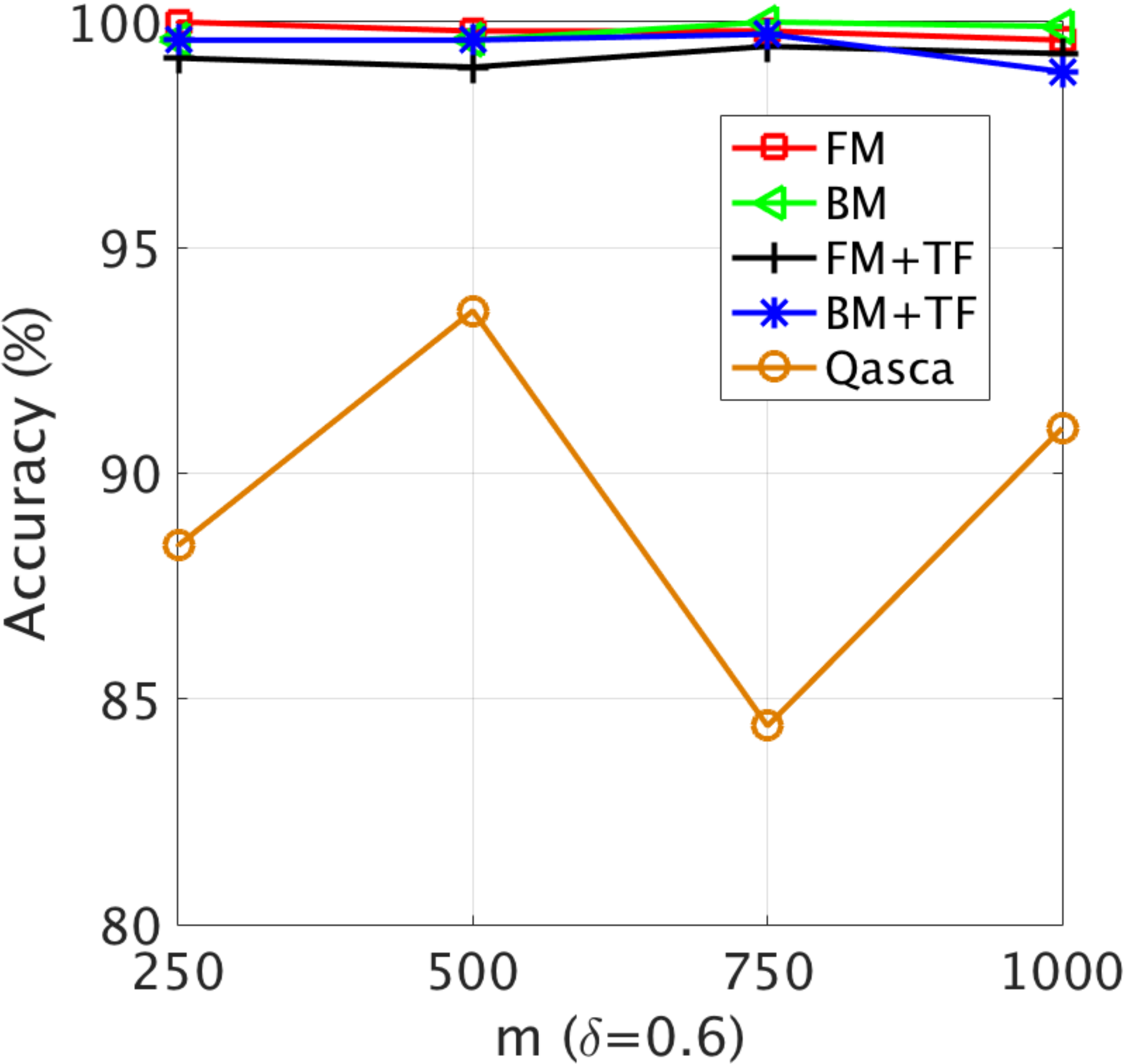}
	}

\subfigure[$\#$ Required Batches] { \label{fig:syn_m_batches_0.2}
	\includegraphics[height = 1.2in,width=0.37\columnwidth]{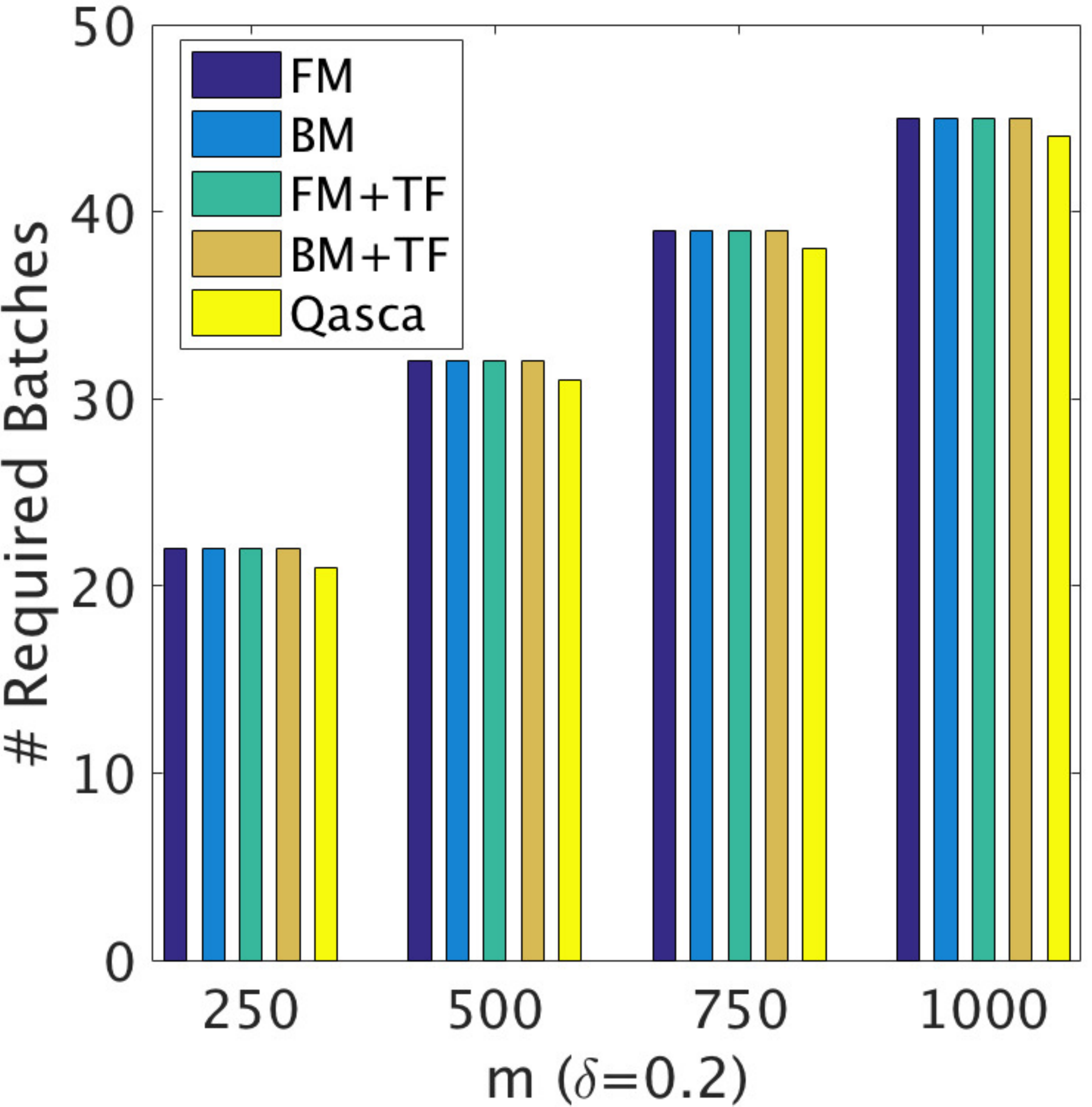}
}	
\subfigure[$\#$ Required Batches] { \label{fig:syn_m_batches_0.3}
	\includegraphics[height = 1.2in,width=0.37\columnwidth]{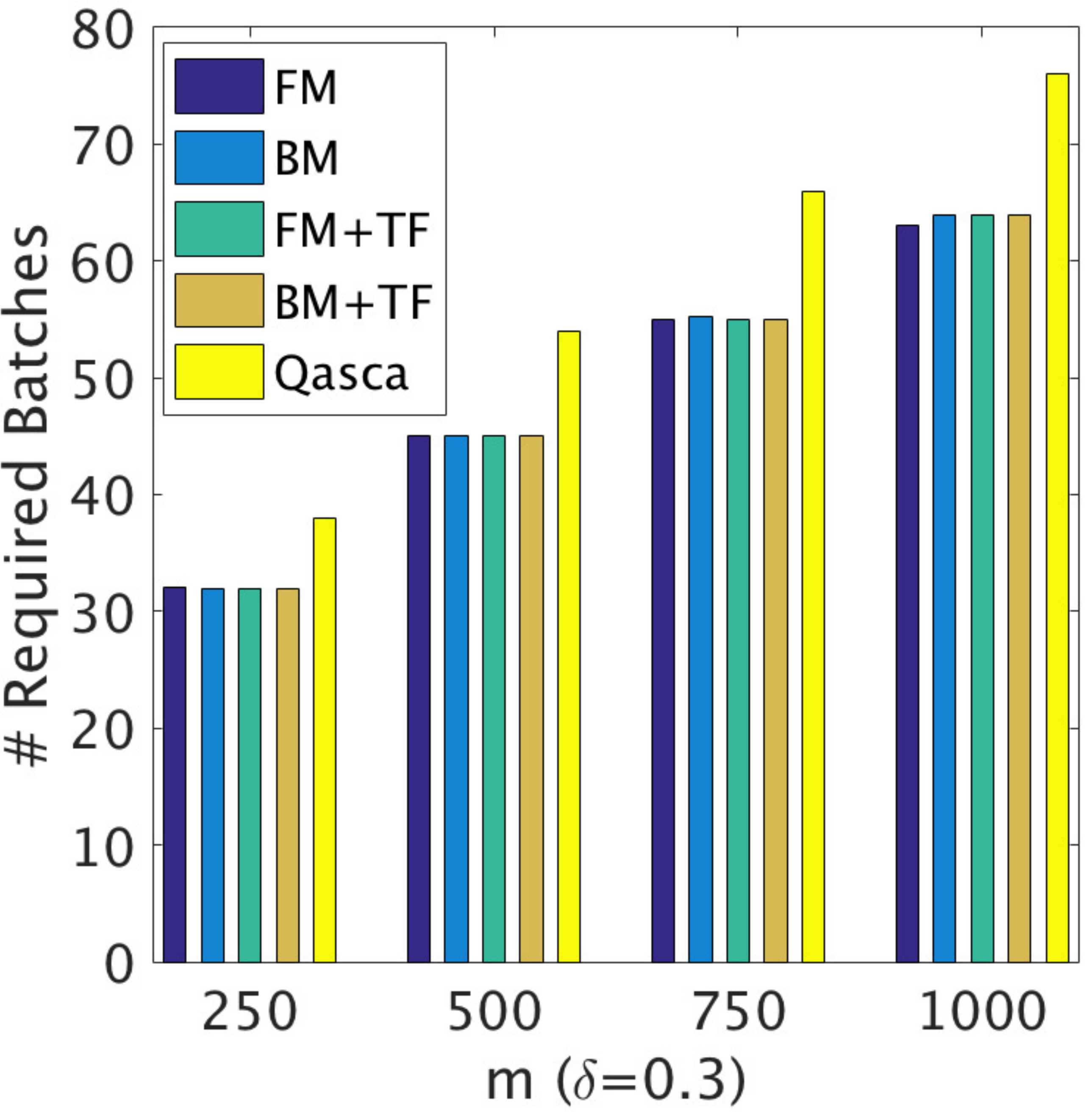}
}
\subfigure[$\#$ Required Batches] { \label{fig:syn_m_batches_0.4}
	\includegraphics[height = 1.2in,width=0.37\columnwidth]{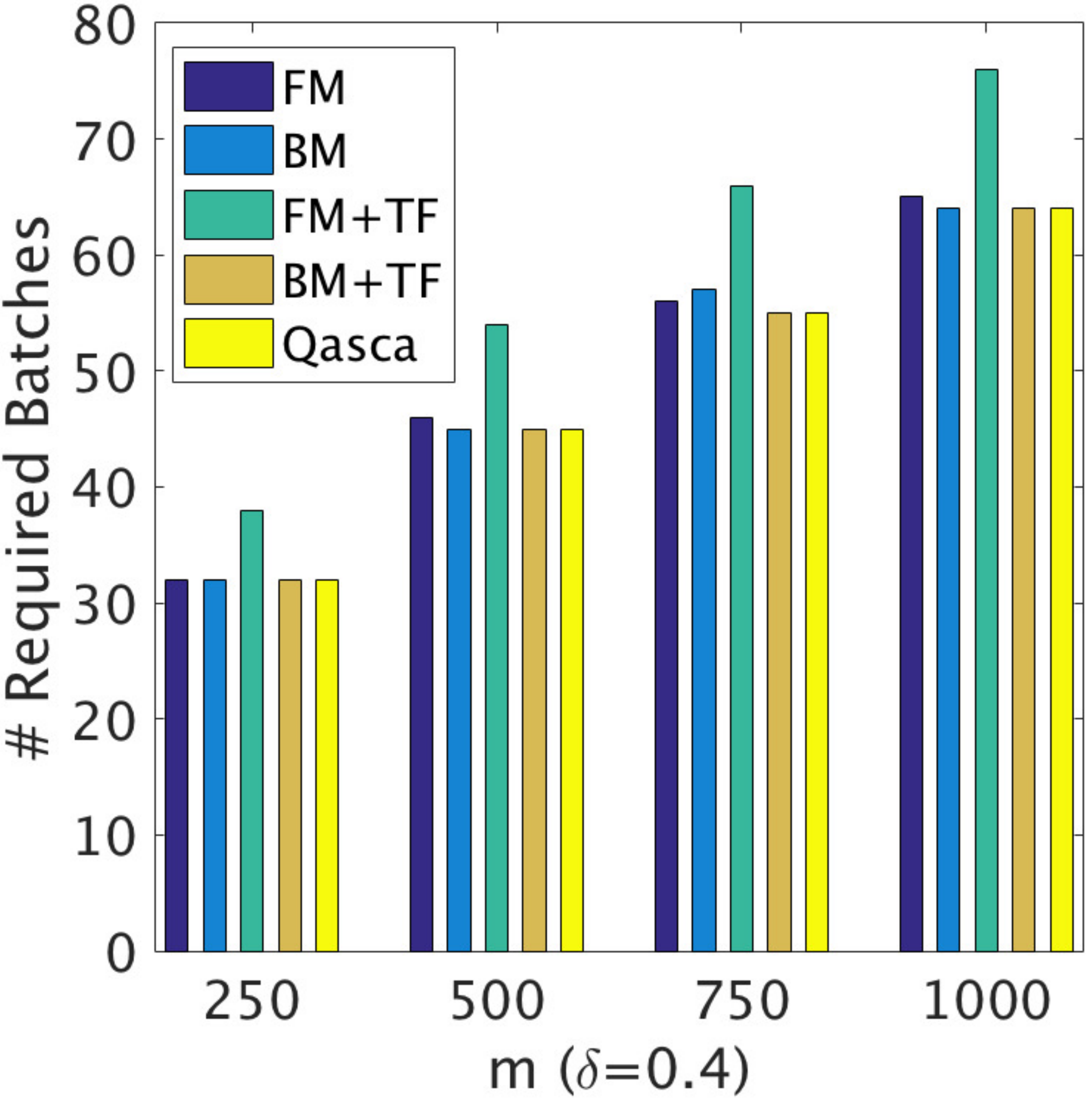}
}
\subfigure[$\#$ Required Batches] { \label{fig:syn_m_batches_0.5}
	\includegraphics[height = 1.2in,width=0.37\columnwidth]{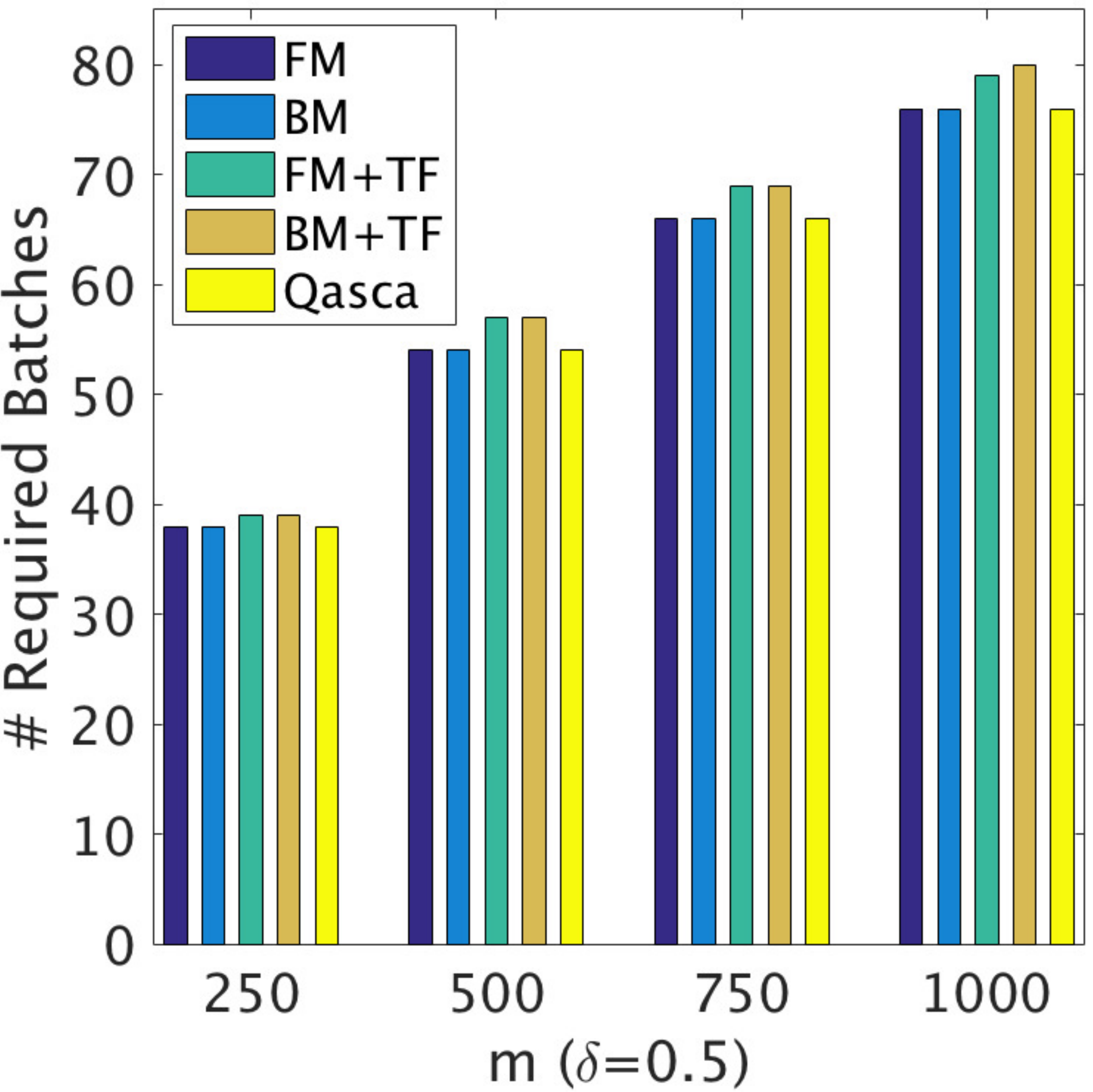}
}
\subfigure[$\#$ Required Batches] { \label{fig:syn_m_batches_0.6}
	\includegraphics[height = 1.2in,width=0.37\columnwidth]{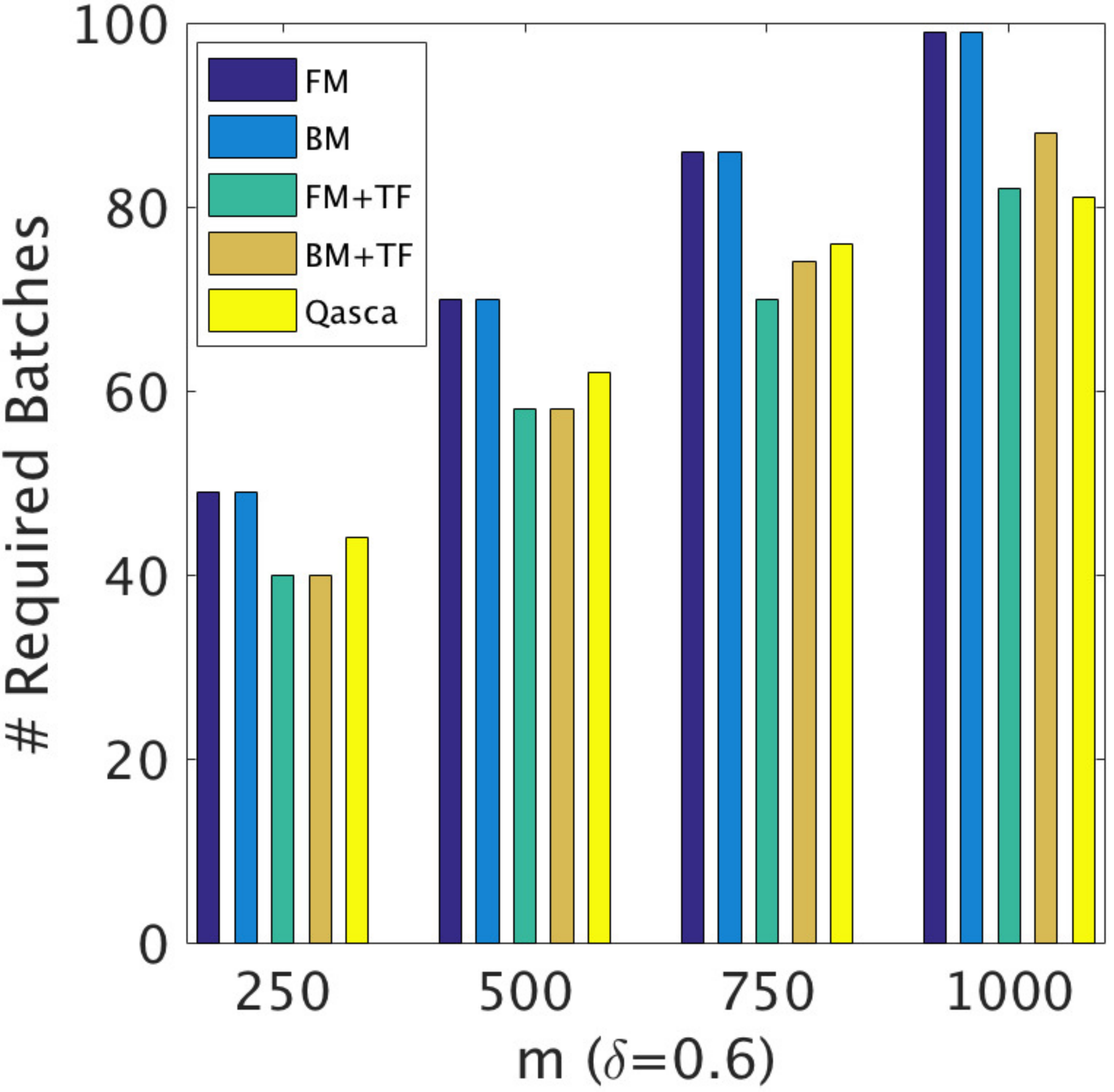}
}\vspace{-2ex}
	\caption{ The Effect of $m$ on Synthetic Data}\vspace{-3ex}
	\label{fig:syn_m}
\end{figure*}

\vspace{-2ex}
\subsection{Data Sets}
We use both real and synthetic data to study our proposed MCQ approaches. Specifically, for real data, we use one of the CrowdFlower datasets \cite{CrowdFlower} which contains worker arrival times and the Duck Identification dataset from \cite{TruthInference} that contains worker labels.

\noindent\textbf{CrowdFlower Dataset.} The dataset we use is called \textit{Relevance of terms to disaster relief topics}, where workers are asked to label the relevance of pairwise terms related to disaster relief. Most importantly, the dataset contains the concrete dates and times of all worker submissions. In total, there are 1,400 workers and 18,062 worker submissions. On the average, each worker has 12.9 submissions. The time span is 20 days.

\noindent\textbf{Duck Identification Dataset.} This dataset is released by the project in \cite{Zheng:2017:TIC:3055540.3055547}. It contains 108 questions and 39 workers, and every worker provides answers to all questions, where workers are asked to label 0 or 1 to denote whether the task contains a duck. All the questions are gold standard questions with true answers.

\noindent\textbf{Synthetic Dataset.} For Synthetic dataset, we randomly generate the truths of \textit{binary-choice questions}. The worker quality and question easiness are drawn from Gaussian distributions, which are simply utilized to simulate the worker selection probability. Note that our parameter estimation model does not initialize the parameters to the above distributions.

\begin{figure}[htbp]
	\centering
	\hfill
	\subfigure[Accuracy]{\includegraphics[height = 1.35in,width=0.48\linewidth]{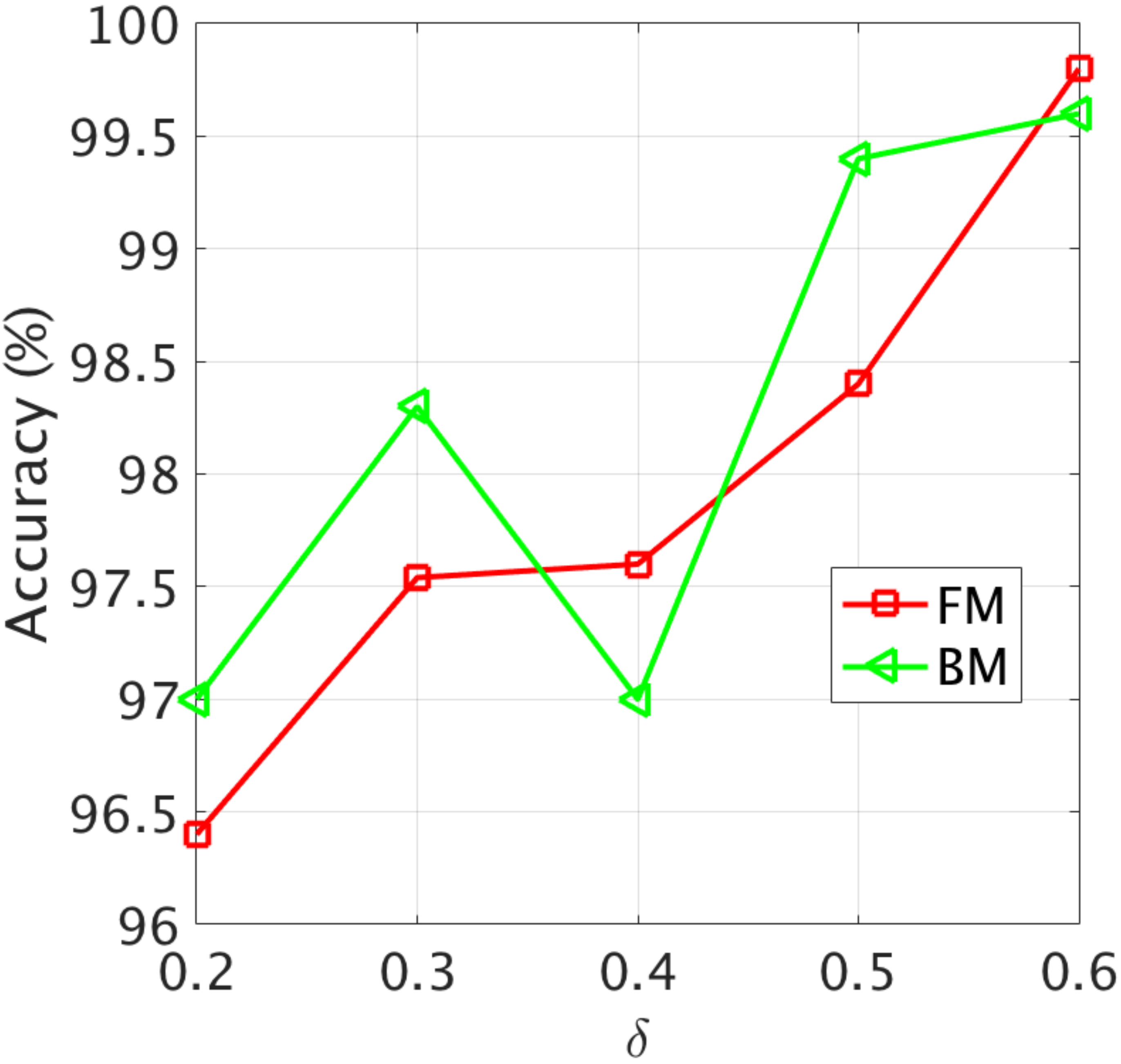}
		\label{fig:syn_delta_accuracy}}
	\hfill
	\subfigure[$\#$ Required Batches]{\includegraphics[height = 1.35in,width=0.48\linewidth]{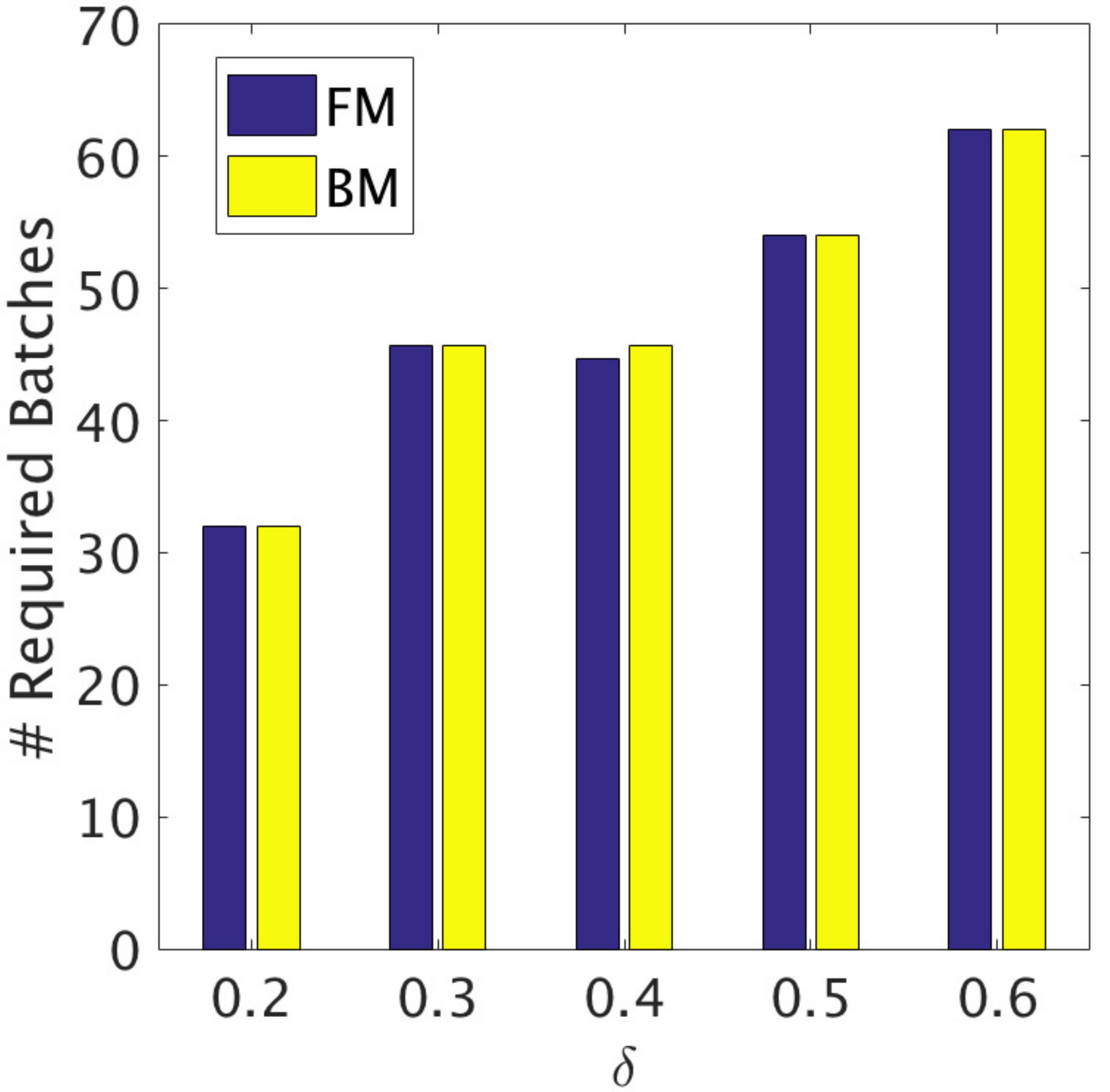}
		\label{fig:syn_delta_batches}}
	\hfill \vspace{-2ex}
	\caption{The Effect of $\delta$ on Synthetic Data}\vspace{-3ex}
	\label{fig:syn_delta_b}	
\end{figure}

\begin{table}[t!]    
	\centering

		\caption{Parameter Settings for Synthetic Experiment}
		\bigskip
		\label{syn_paras}{\small\vspace{-3ex}
			\begin{tabular}{c|c|c}
				
				Parameter & Description & Value\\
				\hline\hline
				$m$ & $\#$ questions & 250, \textbf{500}, 750, 1,000\\
				
				$\delta$ & easiness score threshold & 0.2, \textbf{0.3}, 0.4, 0.5, 0.6\\
				
				$rep$ & $\#$ repetitions in qasca & 1, \textbf{3}, 4\\
				
				$k$ & top-$k$ tasks in qasca & 1\\
				
				$\lambda$ & the worker arriving rate & 1 \\    
				\hline
			\end{tabular}}\vspace{-3ex}

	\end{table}

\subsection{Baseline and Evaluation Metrics}
Our \textit{synchronized task assignment} scheme executes in batches. At the end of each batch, when all worker answers are recorded, the iterative parameter estimation model starts to run, parameters like worker expertise, question easiness, and answer confidence are adaptively inferred and leveraged to perform task assignment in next batch. Therefore, the synchronized task assignment requires no historical performances or profiles of workers. With the above assumption, we adopt the online top-$k$ task assignment policy (using F-score) as the baseline (denoted as Qasca) since the way we achieve and update the worker quality is similar \cite{zheng2015qasca}. Furthermore, the online scheme is enforced to run in batches in order to compare with ours. We use FM and BM to denote First Match Greedy and Best Match Greedy respectively.

We compare our solutions with Qasca in two evaluation metrics: 1) the accuracy of the returned results; 2) the number of required batches to complete all questions. For all solutions, we repeat the question assignment for 100 times and record the average values of each evaluation metric.

\subsection{Experiments on Synthetic Data}
\subsubsection{Experimental Settings}
According to the synthetic dataset mentioned above, the worker expertise $e$ and question easiness $d$ are drawn from Gaussian distributions. Specifically, $e \sim \mathcal{N}(0.7, 0.1)$ and $d \sim \mathcal{N}(0.9, 0.03)$. With such distributions, the worker expertise is mostly set to be larger than 0.5 since we think the worst worker is the one who chooses randomly and the largest value of worker expertise is set to be 1. We adjust $d$ to a relatively high value to guarantee that the correct answers are more than erroneous ones. Note that we control the value of $d$ without exceeding 1 due to its definition. With the above settings, a worker is simulated to select a correct answer with the probability $p = \frac{1}{1+\mathrm{e}^{-e \cdot d}}$ \cite{whitehill2009whose}. For the parameter estimation model, we initialize the worker expertise, question easiness and answer confidence to 0.5 for every worker, question, and answer respectively. In order to accelerate the inference, we consider the influences between answers \cite{yin2008truth}. In binary-choice questions, the high confidence score of one answer indicates the low confidence score of the other. Therefore, we update their confidence scores by using $sc.c(a_{1})' = sc.c(a_{1}) - sc.c(a_{2})$ and $sc.c(a_{2})' = sc.c(a_{2}) - sc.c(a_{1})$, where $a_{1}$ and $a_{2}$ are the two answers for each question. In addition, we replace our parameter estimation model in both FM and BM with TRUTHFINDER (TF) \cite{yin2008truth} to evaluate the improvement of our proposed model, which are represented as FM+TF and BM+TF, respectively. In FM+TF and BM+TF, the parameter estimation model TF simply contains the inference between worker expertise and answer confidence (the left cycle), the inference between answer confidence and question easiness is removed and treated as an external unit that is calculated for one iteration after TF reaches a steady state. The above settings of FM+TF and BM+TF are identical with those in FM and BM. The settings of other important parameters are revealed in Table \ref{syn_paras}. For FM, BM, FM+TF and BM+TF, we vary $\delta$ from 0.2 to 0.6 ($\delta=0.3$ means that the answer confidences have to reach 0.65 and 0.35). Answers with higher confidence will be returned as the truths. For Qasca, the number of repetitions of every question has to be fixed \cite{zheng2015qasca}. And we set $rep=1$ for $\delta=0.2$, $rep=3$ for $\delta=0.3, 0.4, 0.5$ and $rep=4$ for $\delta=0.6$. The reason is that when we set $\delta=0.3$ (for example), a question in FM, BM, FM+TF, and BM+TF is assigned to 3 workers on average. Therefore, $rep=3$ makes sure that the total number of repetitions (or assignment) is identical for all solutions, thus the comparisons of the overall batches and accuracy are fair (the same reason for other cases of $\delta$ and $rep$). In addition, $k=1$ ensures the case that a worker in Qasca is assigned to 1 question, which is the same as FM, BM, FM+TF, and BM+TF. The truths of Qasca are inferred using EM algorithm. For the above five methods, the worker processing time is set to one batch per question, and the worker arriving rate ($\lambda$) is set to one worker per batch.

\subsubsection{Experimental Results} \hfill 
\begin{figure}[t!]\centering
	\scalebox{0.15}[0.15]{\includegraphics{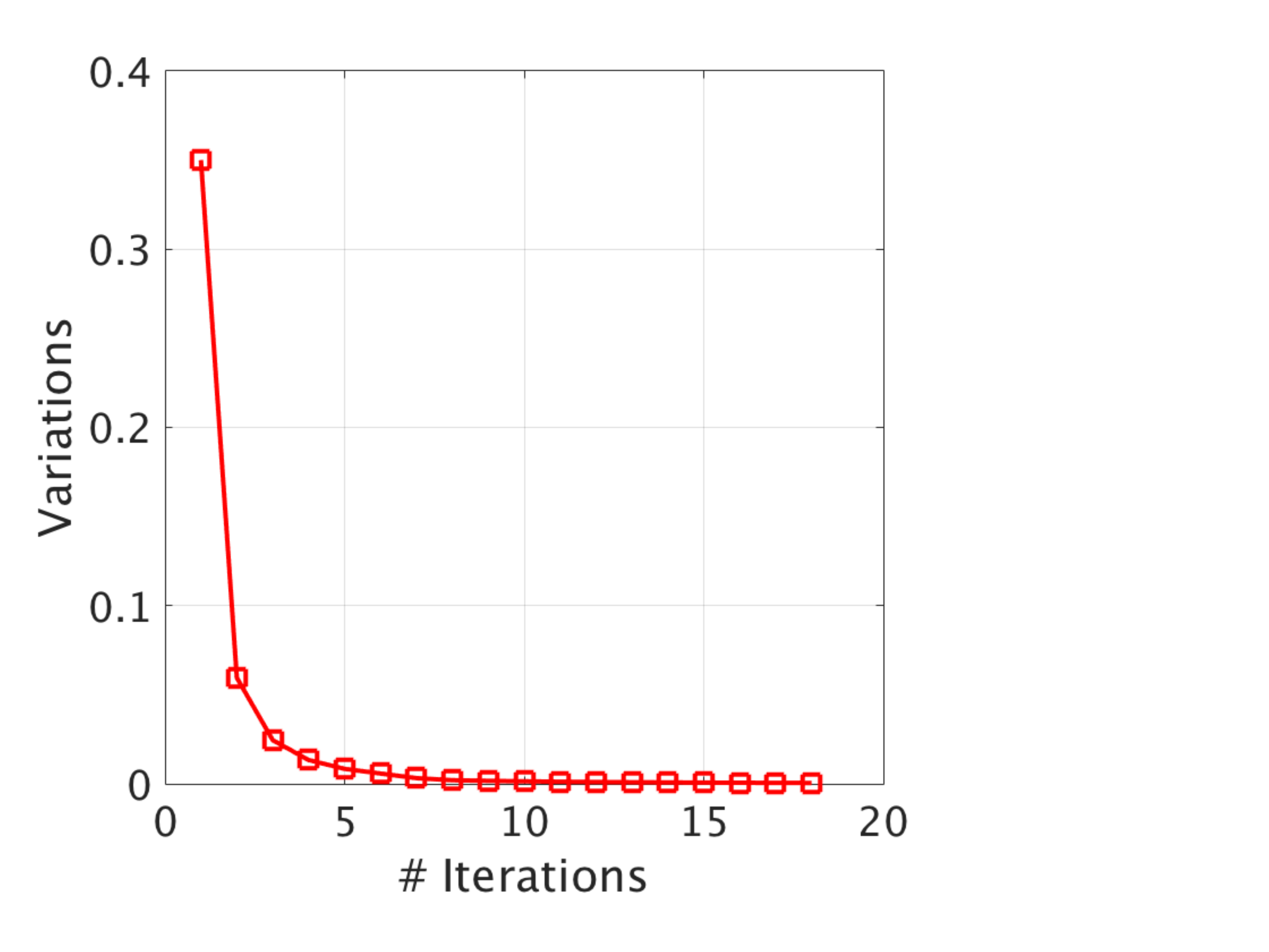}}\vspace{-2ex}
	\caption{\small Variations of Answer Confidence.}\vspace{-3ex}
	\label{fig:r_conv}
\end{figure}
\begin{figure*}[h] \centering
	\subfigure[Accuracy] { \label{fig:r_m_accuracy}
		\includegraphics[height = 1.35in,width=0.45\columnwidth]{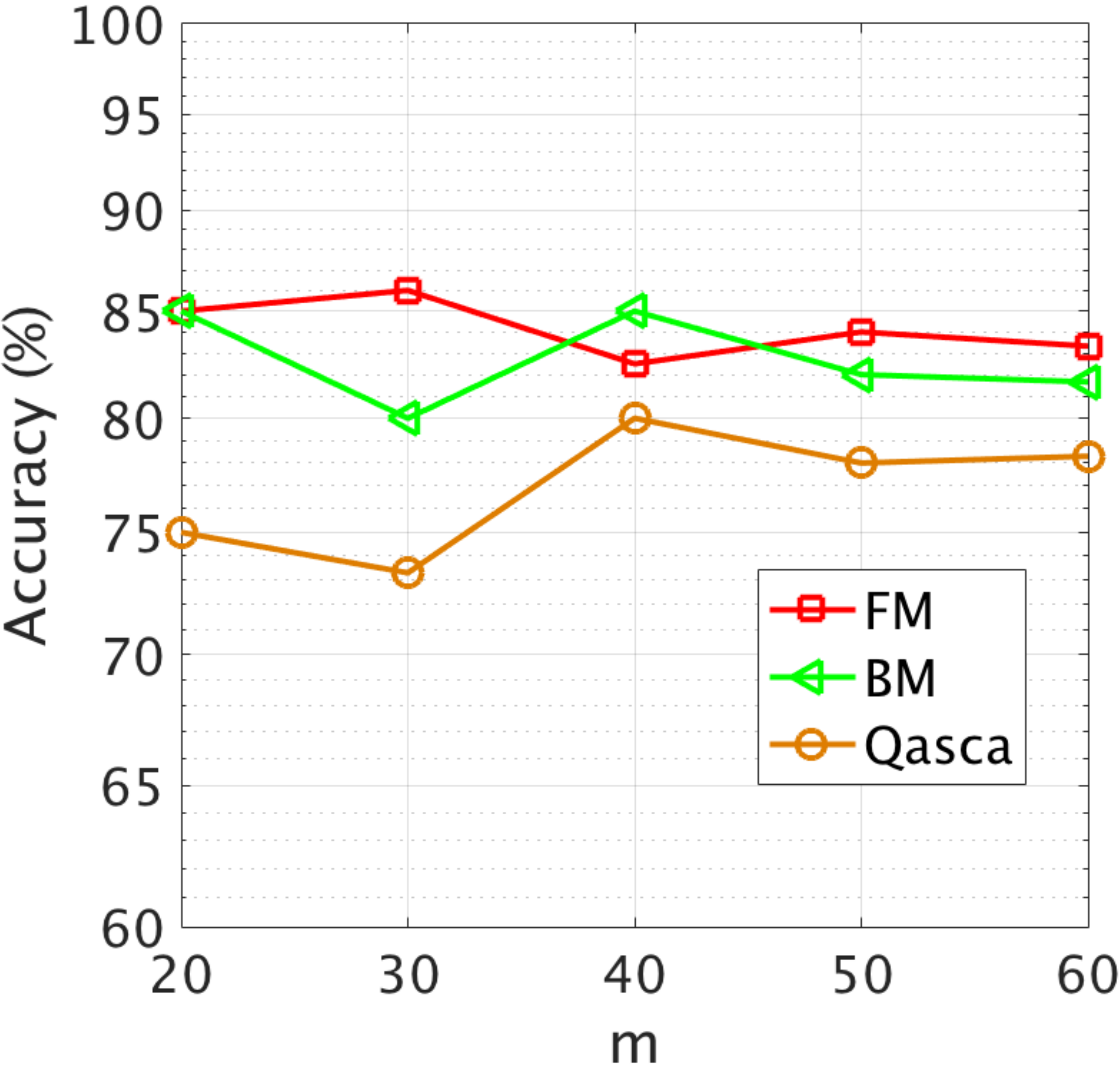}
	}
	\subfigure[$\#$ Required Batches] { \label{fig:r_m_batches}
		\includegraphics[height = 1.35in,width=0.45\columnwidth]{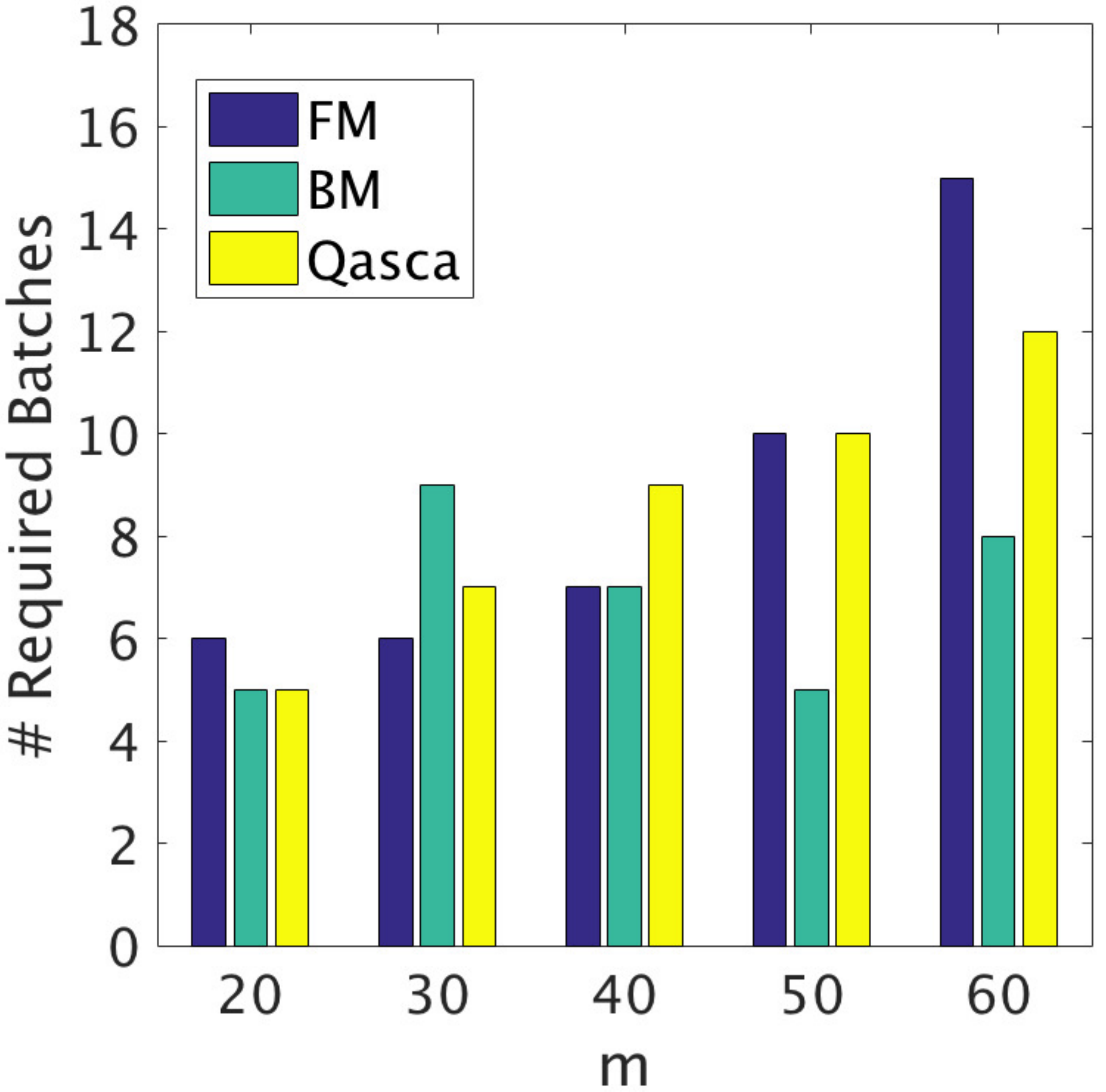}
	}
	\subfigure[Accuracy] { \label{fig:r_delta_accuracy}
		\includegraphics[height = 1.35in,width=0.45\columnwidth]{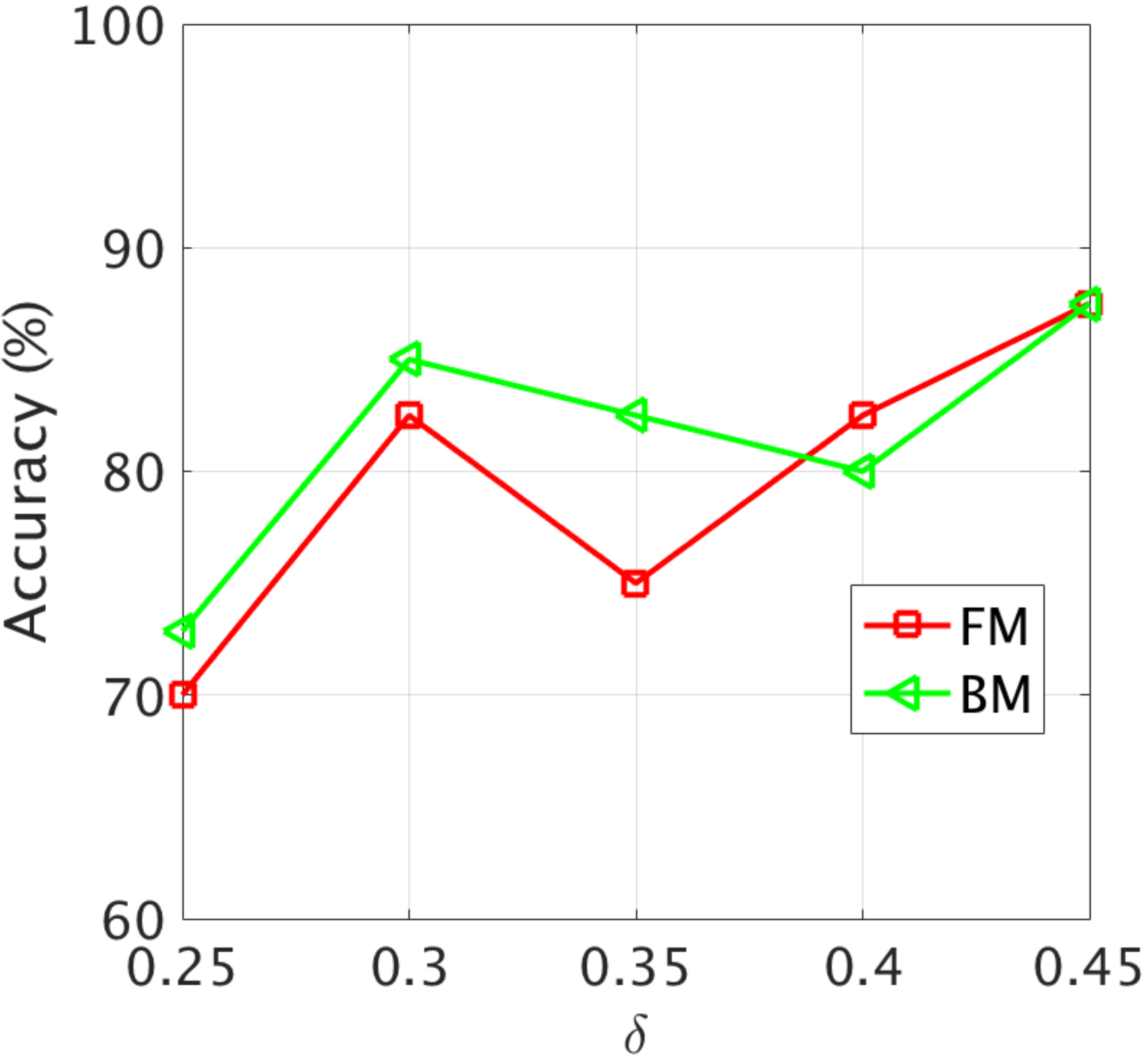}
	}
	\subfigure[$\#$ Required Batches] { \label{fig:r_delta_batches}
		\includegraphics[height = 1.35in,width=0.45\columnwidth]{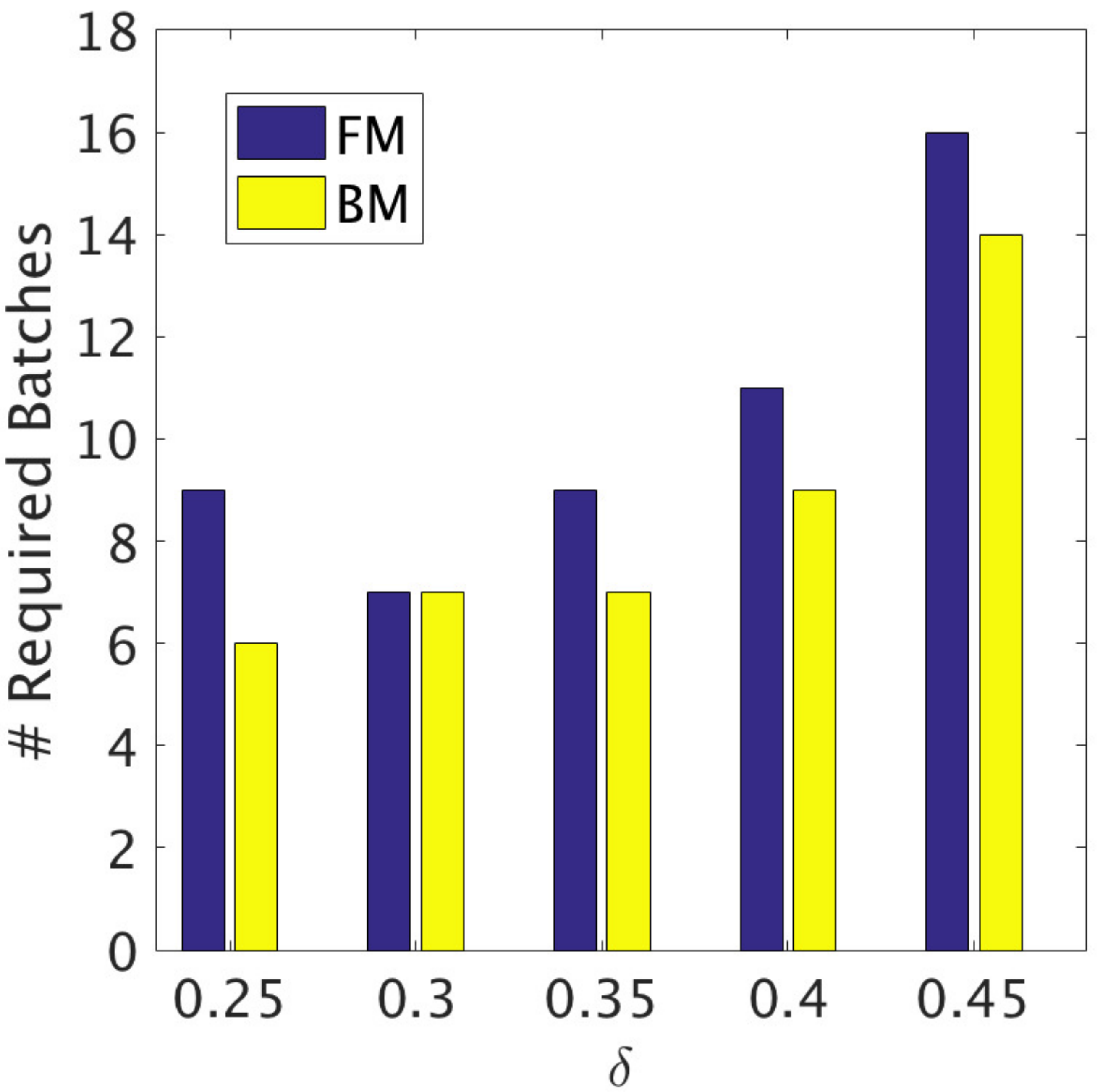}
	}\vspace{-2ex}
	\caption{The Effects of $m$ and $\delta$ on Real Data}\vspace{-3ex}
\end{figure*}
\vspace{-1em}

\noindent\textbf{Accuracy.} To compare the accuracy of the five methods, the number of questions $m$ varies from 250 to 1000, $\delta$ varies from 0.2 to 0.6, and $\lambda=1$. The results are shown in Figure \ref{fig:syn_m_accuracy_0.2}-\ref{fig:syn_m_accuracy_0.6}. From the above five figures, we can observe that for each possible value of $\delta$, our methods FM and BM generally maintain a high accuracy within 95$\%$-100$\%$ even when the quantity of questions is small ($m =250$); Furthermore, our approaches have outperformed Qasca whose highest accuracy is less than 95$\%$. Although FM+TF, BM+TF have high accuracy above 90$\%$, their highest accuracy is lower than ours for most cases, especially when $\delta=0.3$ and $0.4$. There is small increase of accuracy when $m$ becomes larger.

Besides, as $\delta$ increases, the accuracy of all methods regarding the same $m$ (number of questions) is improved, which exactly proves that $\delta$ controls the quality of the returned answers. A higher value of $\delta$ indicates a higher accuracy of the returned answers.

\noindent\textbf{Number of required batches.} Next, we study the latency of each method depicted in Figure \ref{fig:syn_m_batches_0.2}-\ref{fig:syn_m_batches_0.6} that measured in number of batches. As $\delta$ increases, the number of required batches of all methods regarding the same $m$ (the number of questions) is increased. From the analysis above, we know that $\delta$ controls the quality of the returned answers and a higher value of $\delta$ indicates that the returned answers have higher accuracy. In order to maintain an overall higher accuracy of questions, all methods choose to assign more repetitions of questions to workers and more certain answers can be obtained, which results in a longer latency (the number of required batches).

More importantly, when $\delta$ is ranging from 0.2 to 0.5 (Figure \ref{fig:syn_m_batches_0.2}-\ref{fig:syn_m_batches_0.5}), the latency of FM and BM is smaller than that of Qasca, FM+TF and BM+TF for different numbers of questions for most cases (sometimes they have similar values). Combining Figure \ref{fig:syn_m_accuracy_0.2}-\ref{fig:syn_m_accuracy_0.5} where the accuracy of our methods is better than those of others, we can conclude that when $\delta$ is ranging from 0.2 to 0.5, FM and BM are able to balance the latency and accuracy and they perform better than Qasca, FM+TF and BM+TF. In other words, the accuracy is improved without sacrificing the latency, namely, assigning more repetitions to workers to increase the quality. When $\delta=0.6$, FM and BM have longer latency than that of other methods (Figure \ref{fig:syn_m_batches_0.6}), but the accuracy is still the highest (Figure \ref{fig:syn_m_accuracy_0.6}). This also reveals that when $\delta$ has a high value like 0.6, the increase of accuracy for FM and BM tends to be smaller while the latency may be larger than that of others. Therefore, in latter experiments, we set $\delta=0.3$ as the default value (since the accuracy is already high, and starting from it, the increase of accuracy becomes smaller but the overall latency will become larger), and $rep=3$ accordingly. Besides, with the increase of  the number $m$ of questions, the growth of required batches becomes slower, which shows that our methods are effective to apply to question pools of large quantities.

\noindent\textbf{The effect of value $\delta$.} Note that some observations related to $\delta$ are discussed above. In this experiment, to better compare the performance of FM and BM, we set $m=500$ and vary $\delta$ from 0.2 to 0.6. The results are shown in Figure \ref{fig:syn_delta_accuracy} and \ref{fig:syn_delta_batches}. From Figure \ref{fig:syn_delta_accuracy}, as we increase the standard of returned questions, the easiness score threshold $\delta$ is enlarged, BM dominates FM in terms of the accuracy in most cases. The reason is that BM always chooses the best worker for a question, whereas FM simply assigns a worker to the first open question. By checking Figure \ref{fig:syn_delta_batches}, we find that BM maintains a similar latency to FM while achieving higher accuracy than FM. Most importantly, this experiment proves that the setting of $\delta$ is useful and beneficial for the requesters to control the accuracy of returned answers.

\noindent\textbf{Steady state of the parameter estimation model.} In this experiment, we let $m=500$ and $\delta=0.3$, and we aim at studying the speed of the parameter estimation model to reach the steady state (convergence). The left and right cycles of the parameter estimation model are executed for 5 iterations respectively, and then this process is repeated for 20 iterations. The confidence of all the answers are tracked and the average variation in each iteration are drawn in Figure \ref{fig:r_conv}.  From the figure, we can observe that the changes of the answer confidence decline at a fast speed for the first 5 iterations and extremely approach 0 starting from the 7th iteration. This discovery shows the effectiveness of the model and guides us to set an appropriate value for the $\#$ iterations. We simply fix the number of external iterations to 10 which is enough for us to get a stable answer confidence and control the latency caused by transitions between batches.
\begin{table}[t]
	\centering
	\begin{footnotesize}
		\caption{Parameter Settings for Experiments on Real Data}
		\label{real_paras}{\small
		\bigskip\vspace{-4ex}
		\begin{tabular}{l|l|l}
			Parameter & Description & Value\\
			\hline			\hline
			$m$ & $\#$ questions & 20, \textbf{30}, 40, 50, 60 \\
			$\delta$ & easiness score threshold & 0.25, \textbf{0.3}, 0.35, 0.4, 0.45 \\
			$b$ &$\#$ total batches &\textbf{20}, 200, 2K, 20K\\
			$rep$ & $\#$ repetitions in qasca & 3\\
			$k$ & top-$k$ tasks in qasca & 1\\
			\hline
		\end{tabular}}
	\end{footnotesize}\vspace{-4ex}
\end{table}\vspace{-2ex}

\subsection{Experiments on Real Data}
\subsubsection{Experimental Settings}
First of all, we extract the tuples with each containing one worker and a list of her question submission times from the CrowdFlower dataset. Since the time span of the whole dataset lasts for 20 days and the questions are simple, the worker processing times are neglected. Therefore, the question submission times are approximated as the worker arrival times. For 39 workers in the Duck Identification dataset, we randomly match each of them with one worker in the CrowdFlower dataset to monitor their arrival times. This mapping method is also adopted in \cite{Cheng:2017:URR:3035918.3064008}. In order to yield persuasive results, for FM, BM and Qasca, we repeat the random matching process and conduct question assignment for 100 times, and then the accuracy and the number of required batches are averaged. After the mapping is finished, the whole time span of the workers is evenly divided into 20 batches by default.

The parameter settings are shown in Table \ref{real_paras}. In addition, settings of the parameter estimation model are exactly the same as those in the synthetic experiments. Furthermore, $\delta$ (easiness score threshold) is from 0.25 to 0.45, and 0.3 is picked as the default value (the reason is indicated in the first experiment on synthetic dataset); $rep=3$ and $k=1$ for Qasca; $m$ (number of questions) varies from 20 to 100 and the default value is 40.

\begin{figure}[htbp]
	\centering
	\subfigure[Accuracy]{\includegraphics[height = 1.4in,width=0.48\linewidth]{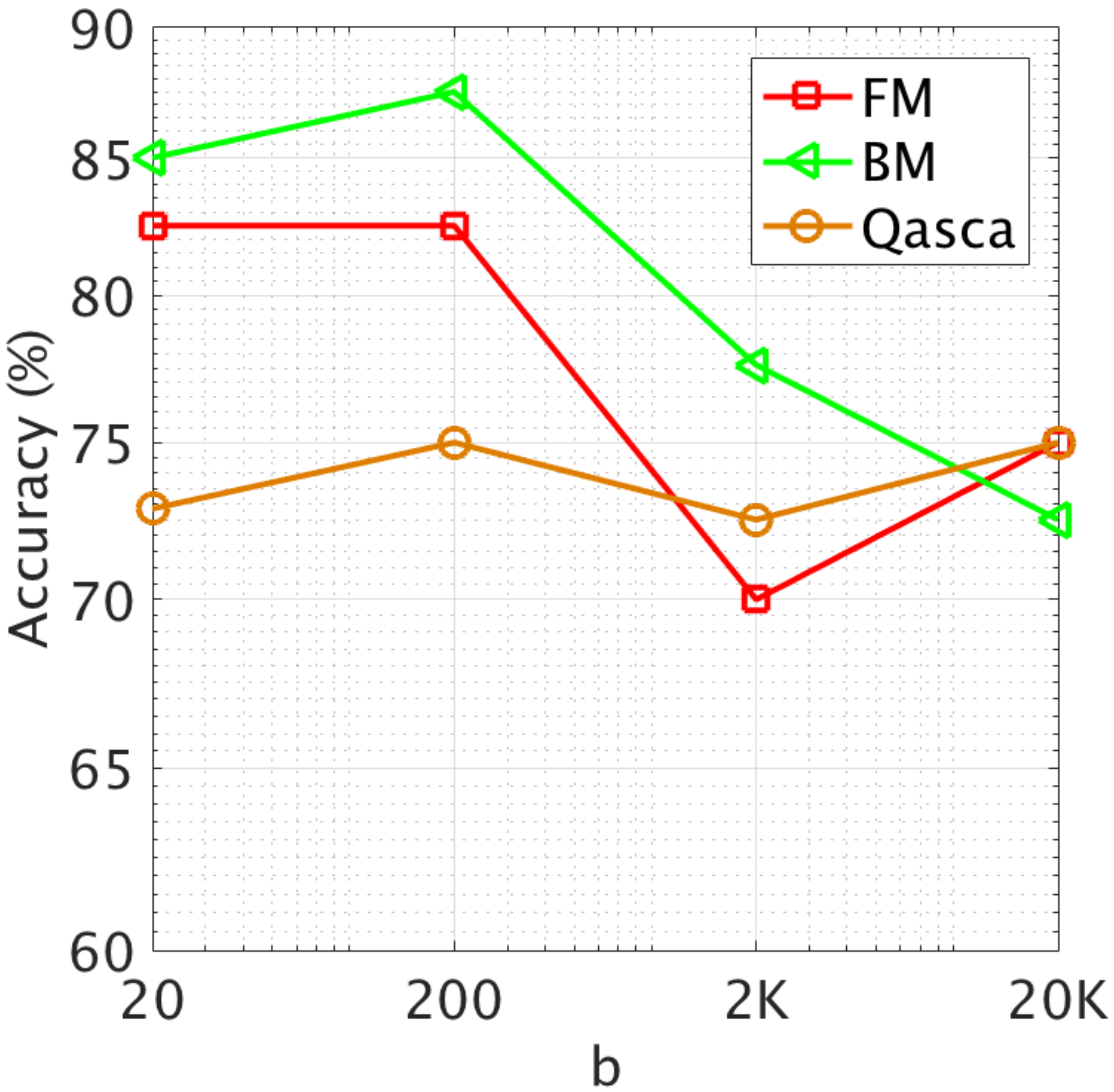}
		\label{fig:r_b_accuracy}}
	\subfigure[$\#$ Required Batches]{\includegraphics[height = 1.4in,width=0.48\linewidth]{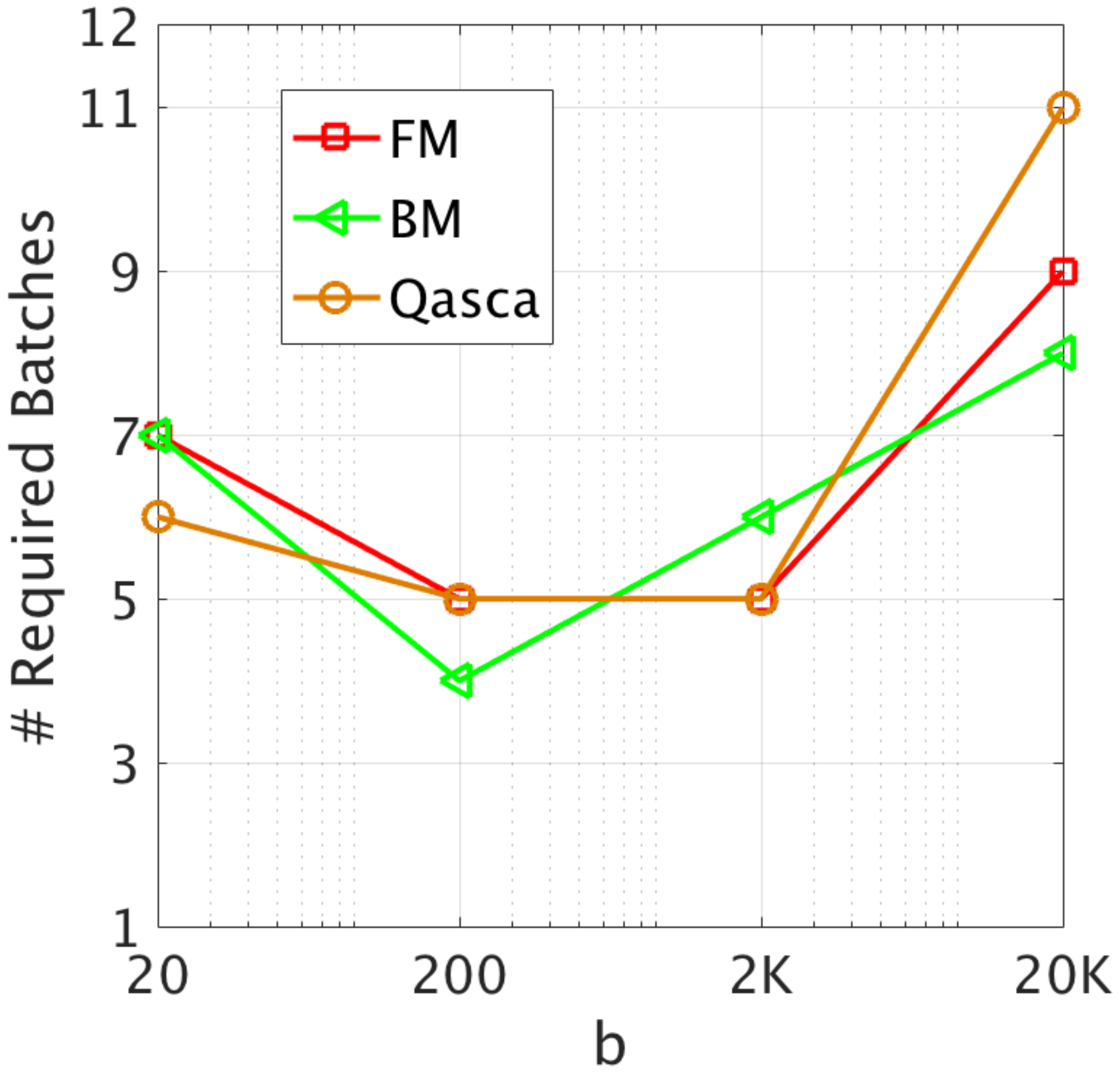}
		\label{fig:r_b_batches}}\vspace{-2ex}
	\caption{ The Effect of $b$ on Real Data}\vspace{-3ex}
	\label{fig:r_b}
\end{figure}

\subsubsection{Experimental Results} \hfill 
\vspace{-2ex}
\textbf{Accuracy and number of required batches.} To evaluate the accuracy and the number of required batches for different numbers of questions, the numbers of questions are set from 20 to 60 and the easiness score threshold $\delta$ is 0.3. From the results in Figure \ref{fig:r_m_accuracy}, FM and BM have higher accuracy than that of Qasca. Furthermore, the number of required batches (latency) are recorded in Figure \ref{fig:r_m_batches}. It is noticed that the latency increases as more questions are released for the workers. And FM and BM achieve smaller overall latency than Qasca for most of the time. These two experiments (combining the experiment in Figure \ref{fig:r_m_accuracy}) together prove that our MCQ approaches have the ability to balance the latency and the accuracy on real data, namely, the resulted high accuracy is not caused by assigning more repetitions to workers.

\noindent\textbf{The effect of value $\delta$.} In this experiment, we set $m=40$ and vary $\delta$ from 0.25 to 0.45. From Figure \ref{fig:r_delta_accuracy}, as $\delta$ becomes larger, the accuracy of FM and BM are generally increasing. The reason is that as $\delta$ increases, an answer can only be returned when reaching a higher confidence, which provides a guarantee of quality for the returned answers. Combining the observation in Figure \ref{fig:r_delta_batches} where BM completes all questions in a smaller quantity of batches for most cases, BM performs better than FM. Therefore, we conclude that as $\delta$ increases, BM becomes more advantageous to provide higher accuracy and better latency than FM.

\noindent\textbf{The effect of value $b$.} Next, we investigate the effect of value $b$, namely, the number of total batches. $b$ is set to 20 by default but in this experiment, it is set to 4 different values from 20 to 20K. The results are shown in Figure \ref{fig:r_b}. Note that $b$ is the number of overall batches configured by the requesters while \textit{$\#$ required batches} refers to the overall batches needed to complete all questions. In Figure \ref{fig:r_b_accuracy} and Figure \ref{fig:r_b_batches}, compared to FM and Qasca, BM achieves the highest accuracy and shortest latency for most of the cases. Furthermore, as $b$ increases, the required batches of FM and BM are increased and their accuracy also drops. This problem is not caused by our approaches. It can be explained by the fact that the average number of workers in a batch will be extremely small or even equal to 0 if we divide the time span into substantial batches. In this case, our approaches are similar to the online scenarios since the joint benefits of workers cannot be considered, which results in a less satisfying accuracy. What's more, since the knowledge of workers is hard to obtain for the initial batches, the answer confidence will take a longer time to reach a steady state, thus the delay is enlarged. Therefore, how to choose a proper batch interval is very important.

\subsection{Summary}
We finally summarize our experimental findings as follows.

\begin{itemize}[leftmargin=*]
	\item{} In most experiments on synthetic and real datasets, our proposed MCQ solutions FM and BM have better performance than the baseline Qasca in terms of the accuracy and overall latency. We conjecture that our solutions in the synchronized task assignment can improve the accuracy of answers while maintaining an acceptable latency.
	\item{} The BM solution assigns workers to questions with the largest increase of easiness score and it outperforms FM in most cases.
	\item{} The iterative parameter estimation model has a fast speed to reach a steady state. Furthermore, it performs better than TRUTHFINDER framework in the synchronized task assignment scenarios.
	\item{} The introduction of easiness score threshold $\delta$ effectively controls the accuracy of the returned answers. 
\end{itemize}

\vspace{-2ex}
\section{related work} \label{relatedwork}
\textbf{Parameter Estimation.} In crowdsourcing, parameter estimation is widely used to infer latent variables such as worker quality and task truths. For example, \cite{yin2008truth} proposes a so-called TRUTHFINDER framework to discover truths over conflicting Internet information, and parameters like worker expertise is iteratively computed until convergence; based on \cite{yin2008truth}, Dong et al. \cite{dong2009integrating} leverage Bayesian analysis to estimate the accuracy of a data source by taking the influence of source dependence into consideration. And more complicated relationships of dependence between workers are further discussed in \cite{dong2010global,dong2009truth}; \cite{whitehill2009whose,welinder2010multidimensional,karger2011iterative,fan2015icrowd,ipeirotis2010quality} devise their truth inference model based on an Expectation-Maximization (EM) method \cite{dawid1979maximum}. 

\noindent\textbf{Task Assignment} \cite{chittilappilly2016survey}\textbf{.} Recent work regarding online task assignment in general-purpose crowdsourcing mainly aims to provide a quality control on the achieved results \cite{boim2012asking,fan2015icrowd,khan2017crowddqs,liu2012cdas,zheng2015qasca}. For example, \cite{khan2017crowddqs} studies when to insert a gold standard question that helps estimate the worker accuracy and supports blocking of poor workers; \cite{boim2012asking} decides the task direction by minimizing the uncertainty of the collected data; \cite{liu2012cdas} estimates the workers' accuracy according to their previous performance and the core quality-sensitive model is able to control the processing latency; \cite{zheng2015qasca} is the state-of-the-art work and it optimizes the evaluation metrics such as F-score on task selection and improves the previous work \cite{boim2012asking,liu2012cdas}. In addition, there are other online approaches. Yuan et al. \cite{yuen2015taskrec} devise a task recommendation framework based on the unified probabilistic matrix factorization; \cite{ho2012online} targets at a set of heterogeneous tasks and skill sets of workers are inspected over the tasks they have done. 

Except for online task assignment, offline assignment strategies are also fully investigated in \cite{Zhao:2013:TLB:2487575.2487708,zhao2015crowd,karger2011iterative}, but they are less relevant to our work. To be specific, the parameters such as worker quality in our work are adaptively updated in batches without assuming prior distributions, whereas offline work like \cite{Zhao:2013:TLB:2487575.2487708,zhao2015crowd} retrieve them from the historical performance of workers in crowdsourcing platforms like AMT. Furthermore, the offline assignment is one shot, which cannot handle the case that worker arrives or leaves halfway after the questions are assigned. 

Although TRUTHFINDER \cite{yin2008truth} has a good performance to infer website quality and truth of web information, it is run on aggregated facts of large quantity, whereas a crowdsourcing task has a few votes. Applying  TRUTHFINDER to crowdsourcing will result in slow convergence and inaccurate answer confidence. In order to address this issue, question easiness is integrated into the framework that accelerates the inference and assists in the task assignment optimization. With the insertion of question easiness, the whole model is adapted to better fit in crowdsourcing circumstances. Furthermore, synchronized task assignment is popular nowadays \cite{Zhang:2017}. Therefore, unlike most of the online task assignment that selects the top-k beneficial questions \cite{zheng2015qasca} to an individual worker, this work considers the synchronized task assignment scenario in which collaborative worker labor is efficiently utilized and the number of completed questions within each batch is maximized. In addition, compared to some offline task assignment work \cite{Zhao:2013:TLB:2487575.2487708,zhao2015crowd}, our model requires no prior knowledge of worker behaviors and is practical to apply on open crowdsourcing platforms where workers are quite dynamic.

\vspace{-2ex}
\section{Conclusion} \label{conclusion}
In this paper, we formally propose the MCQ problem in synchronized task assignment scenario, which assigns questions to workers to make the most efficient utilization of worker labor and maximize the number of completed questions in each batch. The problem is proved to be NP-hard and two greedy approximation solutions are proposed to address the problem. Furthermore, we develop an efficient parameter estimation model to assist in the task assignment optimization. Extensive experiments on synthetic and real datasets are conducted to evaluate our approaches. The experimental results show that our approaches outperform the baseline and achieve a high accuracy while maintaining an acceptable latency.

\bibliographystyle{abbrv}
\bibliography{added}

\end{document}